\def\nbbE{{\mathbb{E}}}
\def\nbbR{{\mathbb{R}}}
\def\P{{\mathbb{P}}}
\def\nrmd{{\rm d}}
\def\nrmx{{\rm x}}
\def\calC{\mathcal{C}}
\def\calE{\mathcal{E}}
\def\nb0{{\mathbf{0}}}
\def\nb1{{\mathbf{1}}}
\def\nbbE{{\mathbb{E}}}
\def\nbbR{{\mathbb{R}}}
\def\nrmd{{\rm d}}
\def\nrmx{{\rm x}}
\newtheorem{lemma}{Lemma}
\newtheorem{definition}{Definition}
\newtheorem{theorem}{Theorem}
\newtheorem{remark}{Remark}
\def\P{\mathbb{P}}
\def\calC{\mathcal{C}}
\def\calE{\mathcal{E}}
\begin{document}
\bstctlcite{IEEEexample:BSTcontrol}

%
%
%
%


\title{Shortest Path Distance in Manhattan Poisson Line Cox Process}
\author{
	Vishnu Vardhan Chetlur, Harpreet S. Dhillon, Carl P. Dettmann
	\thanks{V. V. Chetlur and H. S. Dhillon are with Wireless@VT, Department of ECE, Virginia Tech, Blacksburg, VA (email: \{vishnucr, hdhillon\}@vt.edu). C. P. Dettmann is with School of Mathematics, University of Bristol, UK (email: carl.dettmann@bris.ac.uk). 
		  \hfill Manuscript updated: \today.}
}
\maketitle

\begin{abstract}
While the Euclidean distance characteristics of the Poisson line Cox process (PLCP) have been investigated in the literature, the analytical characterization of the path distances is still an open problem. In this paper, we solve this problem for the stationary Manhattan Poisson line Cox process (MPLCP), which is a variant of the PLCP. Specifically, we derive the exact cumulative distribution function (CDF) for the length of the shortest path to the nearest point of the MPLCP in the sense of path distance measured from two reference points: (i) the typical intersection of the Manhattan Poisson line process (MPLP), and (ii) the typical point of the MPLCP. We also discuss the application of these results in infrastructure planning, wireless communication, and transportation networks.
\end{abstract}

\begin{IEEEkeywords}Stochastic geometry, Manhattan Poisson line process, Manhattan Poisson line Cox Process, path distance, shortest path.
	\end{IEEEkeywords}

\section{Introduction}\label{sec:intro}

The study of random spatial patterns, formally called {\em stochastic geometry}, has played an important role in statistical physics. Some of the well known examples include the study of percolation over both lattices and {random sets of points, referred to as {\em point processes} \cite{barthelemy, aldous2, perc1, perc2}, as well as the characterization of the properties of tessellations formed by point processes and random sets of lines called {\em line processes} \cite{hilhorst, carl}.} In fact, as will be discussed {shortly}, the modern treatment of line processes was inspired by the study of particle trajectories in a cloud-chamber experiment \cite{goudsmit}. The Poisson line process (PLP), {which will be defined formally in Section \ref{sec:background}}, is often the preferred choice for analysis in this line of work due to its tractability \cite{calka, dodson2007planar, tykesson2012percolation}. Given its rich history, a lot is already known about the distributional properties of a PLP \cite{stoyan, miles11964}. However, there has been a growing interest in a doubly stochastic point process {that is constructed by defining a random set of points on each line of a PLP in $\nbbR^2$}, which is relatively less understood and is the focus of this paper. Specifically, we focus on the {distribution of the shortest distance between two points of this point process when traveling only along the random lines. This distance, which will henceforth be referred to as the {\em shortest path distance}, has not been analytically characterized in the literature yet.} Before formulating the problem mathematically, it is instructive to discuss the rich history of PLP and the context in which this new doubly stochastic point process has emerged. 

As mentioned earlier, the development of the theory of line process was inspired by a problem suggested to S. Goudsmit by N. Bohr, which dealt with the chance of intersection of the trajectories of the sub-atomic particles in a cloud-chamber experiment \cite{goudsmit}. This work has motivated a lot of research focused on the various properties of line processes in the latter half of the twentieth century. In particular, R. E. Miles authored a series of papers which explored the fundamental properties of the PLPs and the random polygons generated by the lines of the PLP \cite{miles11964, miles21964, miles1973}. Some of the other prominent works in the literature include the spectral analysis of line processes presented by M. S. Bartlett in \cite{bartlett3} and the study of higher order properties of stationary line processes by R. Davidson in \cite{davidson1970}. 

Owing to its analytical tractability, the PLP has found applications in material sciences \cite{fairclough}, image processing \cite{image_proc}, geology \cite{geomech}, telecommunication \cite{bacc_plp, volker1}, and localization networks \cite{aditya}. 
In \cite{fairclough}, the position of fibers in each layer of a fiber membrane is modeled using PLP to analyze the strength of the membranes. In \cite{bacc_plp}, F. Baccelli proposed to model the road system by a PLP to study the handover behavior in cellular networks. This spatial model was further used by V. Schmidt and his co-authors in the analysis of urban telecommunication networks \cite{volker, volker1, volker2, Gloaguen2006}. Modeling the obstacles in a cellular assisted localization network by a PLP, the blind spot probability of the typical target node is explored in \cite{aditya}. 

A Poisson line Cox process or a Cox process driven by PLP is a doubly stochastic point process constructed by populating points on the lines of a PLP such that the locations of points on each line form a {1D Poisson point process (PPP), which is formally defined in the next section}. The PLCP has recently been employed in several works pertaining to the analysis of vehicular communication networks \cite{vishnuJ2, vishnuJ4, vishnuL1, baccchoi, bacc_letter}. Unlike PLPs, the research on PLCPs is still in nascent stages as some of the fundamental properties of the PLCP have only been explored very recently. For instance, the distribution of various Euclidean distances between the points of the PLCP have been derived in \cite{vishnuJ2} to characterize the signal-to-interference plus noise ratio (SINR)-based coverage probability of the typical vehicular node in the network. The Laplace functional of the PLCP is provided in \cite{bacc_letter} and the asymptotic behavior of the PLCP is studied in \cite{vishnuJ4}. However, these works have only focused on the Euclidean distance properties of the PLCP due to their impact on the network performance. 


Although sparse, a few works in the literature have also explored the path distance characteristics of the PLCP. The authors of \cite{volker1} have analyzed the mean shortest path length between a point of the PLCP and its closest point from another Cox process on the same PLP in the sense of Euclidean distance. The asymptotic behavior of this shortest path distance was investigated in \cite{volker3}. However, the analytical characterization of these path distances is still an open problem in the literature and is the main contribution of this paper. For this purpose, we consider a special variant of the PLP called Manhattan Poisson line process (MPLP), which will be discussed in detail in the next section. {For a stationary Cox process constructed on the MPLP, referred to as MPLCP, we derive the exact CDF of the shortest path distance to the nearest point of the MPLCP in the sense of path distance from two reference points: (i) the typical intersection of the MPLP, and (ii) the typical point of the MPLCP.} To the best of our knowledge, this is the first work to present the analytical characterization of path distances in a MPLCP. We also discuss the utility of the path distance characteristics of the MPLCP in providing useful insights in the areas of wireless communications, transportation networks, urban planning, and personnel deployment.

\section{Background and Notation}\label{sec:background}
In this section, we present a brief introduction to line processes and some of its fundamental properties. While we discuss only those aspects of line processes that are necessary for this paper, a detailed account of the theory can be found in \cite{stoyan}. 

%
%
%

\subsection{Line process preliminaries}
{As the PPP is a primary building block in the construction of the MPLP considered in our paper, we begin our discussion by defining it formally next.}
\renewcommand{\labelitemi}{$\bullet$}
\begin{definition} 
	{{\em (Poisson point process.)} A random set of points $\Phi \subset \nbbR^d$ with intensity measure $\Lambda$ is a PPP if it satisfies the following two properties:}
	\begin{itemize}
		{\item The number of points of $\Phi$ within any bounded Borel set $A \subset \nbbR^d$, denoted by $N_p(A)$, follows a Poisson distribution, i.e.}
		\begin{align}
		\P (N_p(A) = k) = \frac{\exp(-\Lambda(A)) (\Lambda(A))^k}{k!},
		\end{align}
		{where $\Lambda(A)$ is the average number of points of $\Phi$ in $A$.
		\item The number of points of $\Phi$ lying in $n$ disjoint Borel sets form a set of $n$ independent random variables for arbitrary $n$, which is also termed the independent scattering property.}
	\end{itemize}
\end{definition}
{A PPP is said to be \textit{homogeneous} if it has a constant intensity $\lambda$, which is the average number of points per unit volume in $ \nbbR^d $.}

{\em Line process}. As mentioned in Section \ref{sec:intro}, a line process is just a random collection of lines. In order to define it more formally, first observe that any undirected line $L$ in $\nbbR^2$ can be uniquely parameterized by its signed perpendicular distance $\rho$ from the origin $o \equiv (0,0)$ and the angle $\theta$ subtended by the line with respect to the positive $x$-axis in counter clockwise direction, as shown in Fig. \ref{fig:prelim_r2}. The sign of $\rho$ is negative if the origin is to the right or above the line. Thus, the pair of parameters $\rho$ and $\theta$ can be represented as the coordinates of a point on the half-cylinder $\calC \equiv [0, \pi ) \times \nbbR $, which is termed as the {\em representation space}, as illustrated in Fig. \ref{fig:prelim_r2}. 
Thus, a random collection of lines in $\nbbR^2$ can be constructed from a set of points on $\calC$. Such a set of lines generated by a PPP on $\calC$ is called a PLP. 


As mentioned earlier, in this paper, we focus on the special case of PLP called MPLP in which the orientations of the lines are restricted to $\{0, \pi/2\}$, thereby obtaining a set of horizontal and vertical lines in $\nbbR^2$, as depicted in Fig. \ref{fig:prelim_r2}. Thus, the MPLP $\Phi_l$ in $\nbbR^2$ can be constructed from two independent 1D PPPs $\Psi_0$, and $\Psi_{\pi/2}$ along the lines $\theta = 0$, and $\theta = \pi/2$, respectively, in the representation space $\calC$. 
Alternatively, one can construct a MPLP by first populating points along the $x$ and $y$-axes in $\nbbR^2$ according to independent 1D PPPs $\Xi_x$ and $\Xi_y$ and drawing vertical and horizontal lines through those points, respectively. This interpretation is useful in visualizing some of the basic properties of MPLP which will be discussed next. In this paper, we will mainly follow this interpretation for the ease of clarity and exposition. 

{\em Stationarity}. Analogous to a point process, a line process $\Phi_l$ is stationary if the distribution of lines is invariant to any translation $T_{(t, \beta)}$, which corresponds to the translation of the origin by a distance $t$ in a direction that makes an angle $\beta$ with respect to positive $x$-axis in counter clockwise direction. Upon applying a translation $T_{(t, \beta)}$, the representation of a line $L$ in $\calC$ changes from $(\rho, \theta)$ to $(\rho + t \sin (\theta - \beta ), \theta )$. Therefore, a MPLP $\Phi_l$ is stationary if the 1D PPPs $\Psi_0$, and $\Psi_{\pi/2}$ are stationary or alternatively, $\Xi_x$ and $\Xi_y$ are stationary.

\begin{figure}
	\centering
	\begin{minipage}[t]{.50\textwidth}
		\centering
		\includegraphics[width=.95\textwidth]{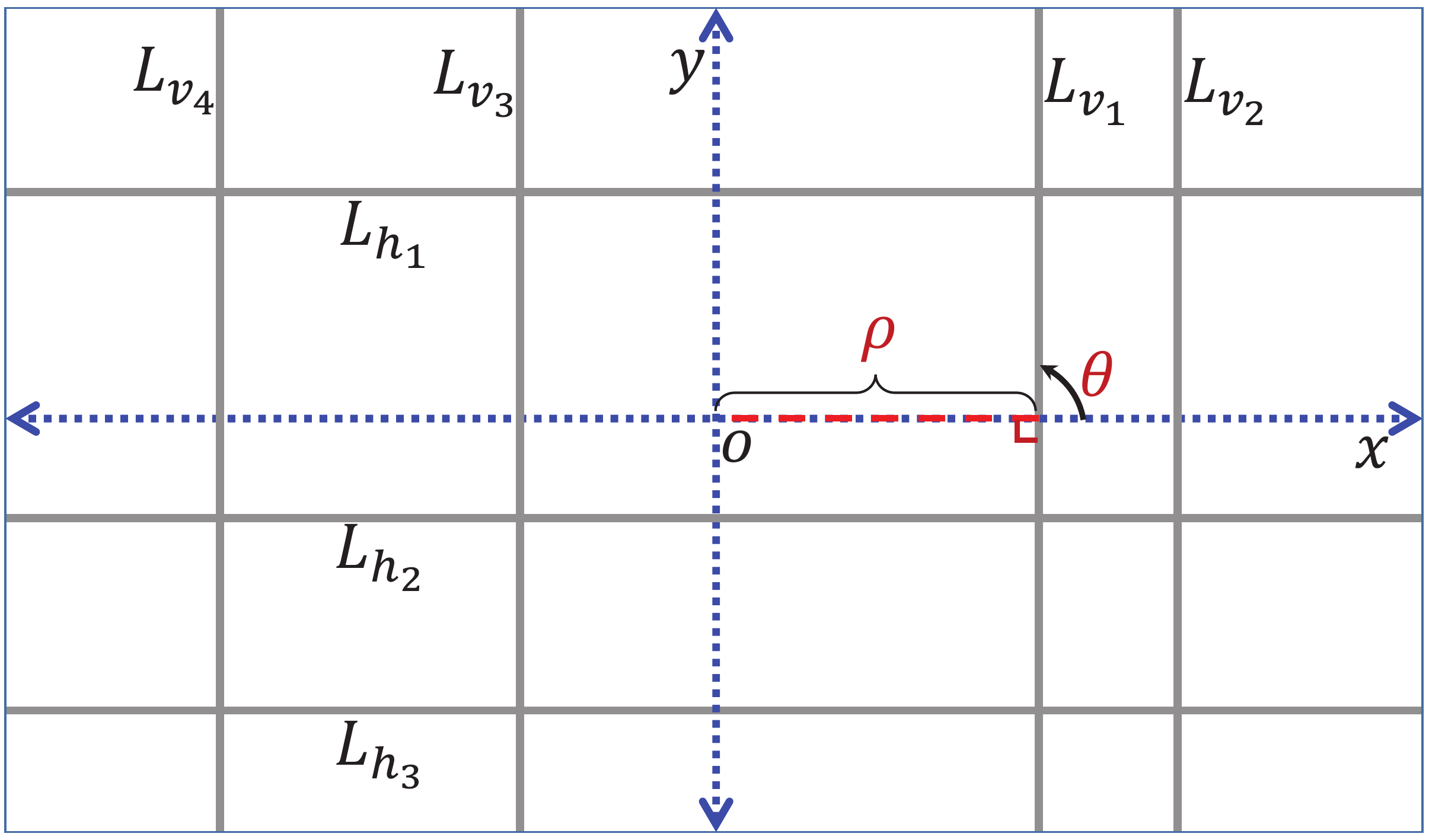}
	\end{minipage}%
	\quad 
	\begin{minipage}[t]{.31\textwidth}
		\centering
		\includegraphics[width=.95\textwidth]{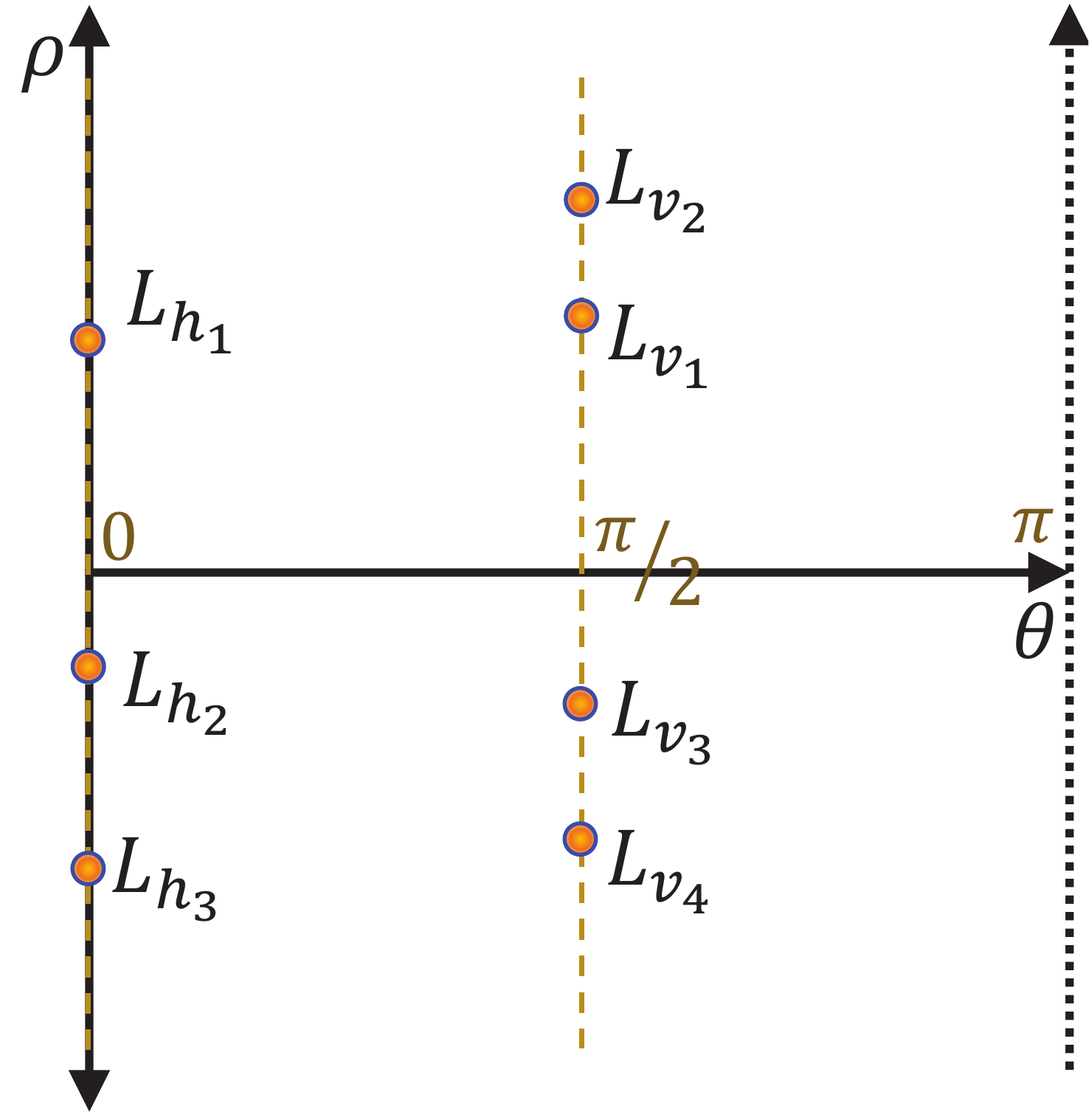}
	\end{minipage}
	\caption{Illustration of the Manhattan Poisson line process in $\nbbR^2$ and the corresponding point process in representation space $\calC \equiv [0, \pi) \times \nbbR $.}
	\label{fig:prelim_r2}
\end{figure}

{\em Line density}. The line density of a line process is defined as the mean line length per unit area. The relationship between the line density and the density of the corresponding point process is given by the following Lemma.
\begin{lemma}
	For a stationary MPLP $\Phi_l$ constructed from independent and homogeneous 1D PPPs $\Xi_x$ and $\Xi_y$, each with density $\lambda_l$, the line density $\mu_l$ is given by $\mu_l = 2 \lambda_l$.
\end{lemma}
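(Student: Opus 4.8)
The plan is to compute the expected total line length contained in a bounded test region and divide by its area, invoking stationarity to ensure the resulting ratio is independent of the choice of region. First I would fix a square window $W = [0,a] \times [0,a]$ of side $a > 0$ and area $a^2$, and treat the vertical and horizontal lines of $\Phi_l$ separately.

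For the vertical lines, recall that each such line is drawn through a point of $\Xi_x$ on the $x$-axis; hence the vertical lines meeting $W$ are exactly those generated by points of $\Xi_x$ in the projection interval $[0,a]$. Since $\Xi_x$ is a homogeneous 1D PPP of density $\lambda_l$, the expected number of such points is $\lambda_l a$, and each associated vertical line intersects $W$ in a segment of length exactly $a$, as it spans the full height of the square. By Campbell's theorem the expected total length of vertical segments inside $W$ is therefore $\lambda_l a \cdot a = \lambda_l a^2$. An identical argument applied to $\Xi_y$ and the horizontal lines gives an expected total horizontal length of $\lambda_l a^2$.

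Adding the two contributions yields an expected total line length of $2 \lambda_l a^2$ inside $W$, and dividing by the area $a^2$ gives $\mu_l = 2\lambda_l$, as claimed. The only point requiring care is to confirm that this ratio is genuinely the line density in the sense defined, i.e.\ that it is independent of the size and placement of $W$; this follows from the stationarity of $\Xi_x$ and $\Xi_y$, since the expected count of lines crossing the window scales linearly with the side length while each line contributes a chord of that same length, so the ratio is invariant. I do not anticipate a real obstacle beyond this bookkeeping: the essential input is simply that a homogeneous 1D PPP of density $\lambda_l$ places on average $\lambda_l a$ points in an interval of length $a$, each contributing a full-width chord of length $a$ to the window.
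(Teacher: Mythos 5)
Your proof is correct and follows essentially the same route as the paper's: decompose $\Phi_l$ into horizontal and vertical lines, apply Campbell's theorem to the generating 1D PPPs $\Xi_x$ and $\Xi_y$, and divide the expected total chord length in a test region by its area. The only difference is cosmetic---you use a square window, so every chord has constant length $a$, whereas the paper uses a disk $b(o,d)$ with chords of length $2\sqrt{d^2-\rho^2}$, which makes your computation slightly simpler but changes nothing essential.
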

\begin{proof}
	Let us consider a ball of radius $d$ centered at the origin $b(o,d)$. We denote the set of horizontal and vertical lines of $\Phi_l$ by $\Phi_{lh}$ and $\Phi_{lv}$, respectively. The line density $\mu_l$ can now be computed as 
	\begin{align*}
\mu_l 	&= \frac{1}{\pi d^2} \nbbE \left[ \sum_{L \in \Phi_l} \nu_1( L \cap b(o, d) ) \right]\\
	&= \frac{1}{\pi d^2} \nbbE \left[ \sum_{L_h \in \Phi_{lh}} \nu_1( L_h \cap b(o, d) ) +  \sum_{L_v \in \Phi_{lv}} \nu_1( L_v \cap b(o, d) )\right]\\
	&= \frac{1}{\pi d^2} \nbbE \left[ \sum_{ \substack{\rho_x \in \Xi_{x} : \\ \rho_x \leq d}} 2 \sqrt{ d^2 - \rho_x^2} \right] + \frac{1}{\pi d^2} \nbbE \left[ \sum_{ \substack{\rho_y \in \Xi_{y} : \\\rho_y \leq d}} 2 \sqrt{ d^2 - \rho_y^2} \right]\\
	&\stackrel{(a)}{=} \frac{1}{\pi d^2} (\lambda_l 2 ) \bigg[ \int_0^1 2 \sqrt{ d^2-\rho_x^2} {\rm d}\rho_x +\int_0^1 2 \sqrt{ d^2-\rho_y^2} {\rm d}\rho_y \bigg] =2 \lambda_l, 
	\end{align*}
	where $\nu_1(\cdot)$ denotes the one dimensional Lebesgue measure and (a) follows from Campbell's theorem for sums over stationary 1D PPPs $\Xi_{x}$ and $\Xi_{y}$ \cite{haenggi}.
\end{proof}

{\em Lines intersecting a region}. For a stationary MPLP $\Phi_l$ with line density $\mu_l$, the number of horizontal and vertical lines that intersect a convex region $K \subseteq \nbbR^2$ are Poisson distributed with means $\mu_l \nu_1(K_y)/2$ and $\mu_l \nu_1(K_x)/2$, respectively, where $K_x$ and $K_y$ denote the projection of $K$ onto $x$ and $y$ axis.

\subsection{Spatial model and notation}
We will now provide a detailed description of the spatial model and also introduce the notation that will be followed in the paper. We consider a stationary MPLP $\Phi_l \equiv \{L_{h_1}, L_{h_2}, \dots, $ $ L_{v_1}, L_{v_2}, \dots \}$ in $\nbbR^2$ in which the vertical and horizontal lines are generated by independent homogeneous 1D PPPs $\Xi_x$ and $\Xi_y$, each having density $\lambda_l$. We denote the set of horizontal and vertical lines by $\Phi_{lh} \equiv \{L_{h_1}, L_{h_2}, \dots \}$ and $\Phi_{lv}\equiv \{L_{v_1}, L_{v_2}, \dots \}$, respectively. We construct a MPLCP $\Phi_c$ by populating points on the lines of $\Phi_l$ such that the locations of points on each line form a 1D PPP with density $\lambda_c$, as illustrated in Fig. \ref{fig:sysmod}. Note that the MPLCP $\Phi_c$ is also stationary due to the stationarity of the underlying MPLP and the homogeneity of 1D PPP on each line \cite{morlot, bacc_letter}. {As mentioned earlier in Section \ref{sec:intro}, we will consider two types of reference points with respect to which the path distance is measured: (i) the typical intersection of the MPLP, and (ii) the typical point of the MPLCP. As will be evident from the sequel, the treatment of the typical intersection case is relatively easier and will act as a precursor for the typical point case, whose exact treatment is the most important contribution of this paper.}

{For the typical intersection case, we measure the path distance of the nearest point of the MPLCP in the path distance sense with respect to the typical intersection of the MPLP, which can be placed at the origin $o$ without loss of generality. Thus, a horizontal line $L_x$ and a vertical line $L_y$, which are aligned along the $x$ and $y$-axes, respectively, pass through the typical intersection. In other words, both the homogeneous 1D PPPs $\Xi_x$ and $\Xi_y$ must now contain a point at the origin. By Slivnyak's theorem \cite{morlot, haenggi}, the conditioning on a point of the homogeneous PPP at the origin is equivalent to adding a point at the origin. Therefore, under this conditioning (more formally, under \textit{Palm probability}), the resulting line process is $\Phi_{l_0, {\rm int}} = \Phi_l \cup \{L_x, L_y\}$.} Thus, under the Palm probability of the intersection points, the resulting point process $\Phi_{c_0, {\rm int}}$ can be interpreted as the superposition of the point process $\Phi_c$ and two 1D PPPs each with density $\lambda_c$ along the lines $L_x$ and $L_y$ \cite{vishnuJ2, baccchoi}.

\begin{figure}
	\centering
	\includegraphics[width=.6\textwidth]{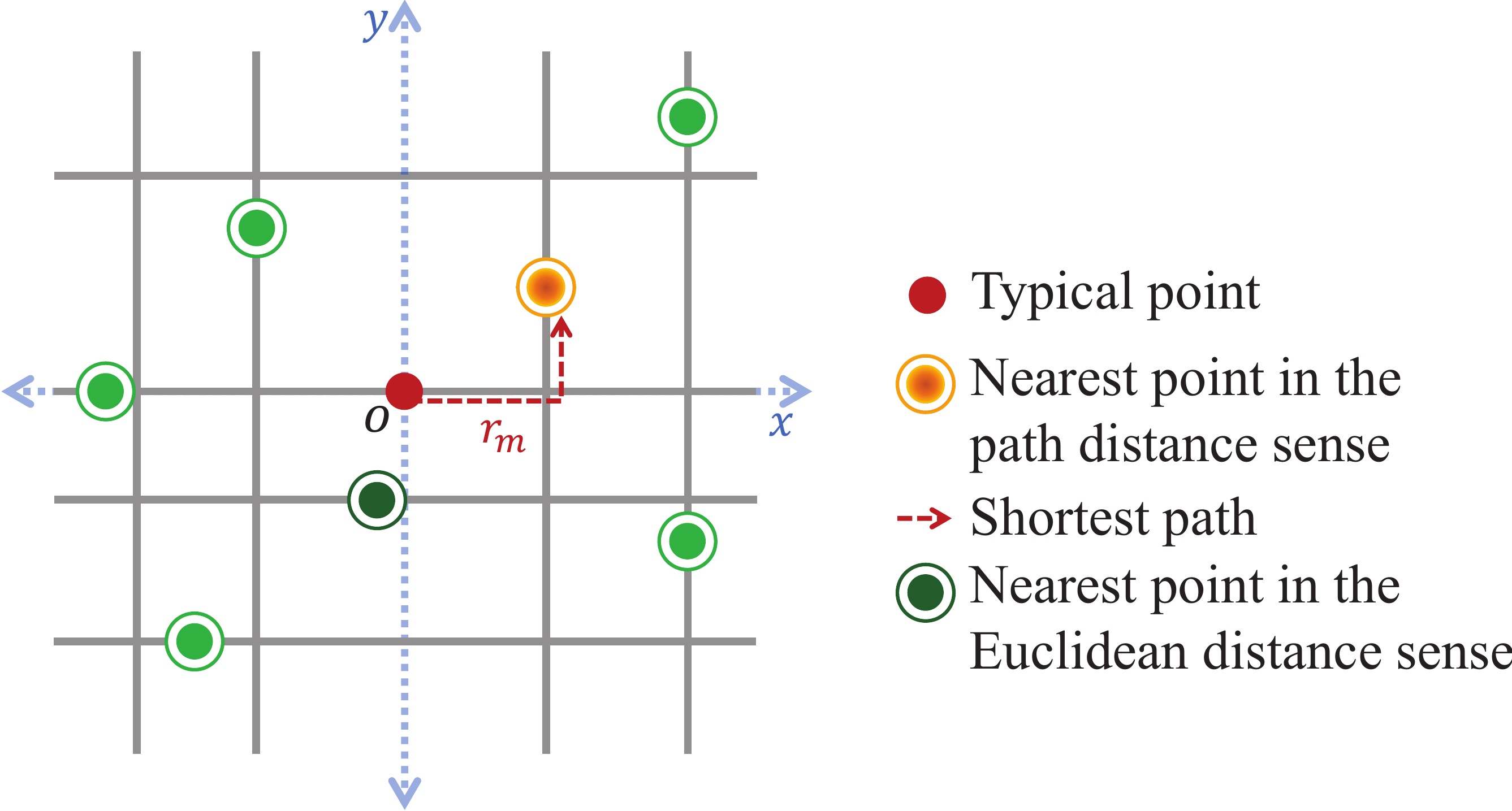}
	\caption{Illustration of the spatial model.}
	\label{fig:sysmod}
\end{figure}

{In case of the typical point of the MPLCP, without loss of generality, we assume that it is located on a horizontal line of the MPLP $\Phi_l$.} In this case, upon conditioning on the location of the typical point at the origin and using the same argument as above, the resulting line process is $\Phi_{l_0, {\rm typ}} = \Phi_l \cup \{L_x\}$. Thus, the resulting point process $\Phi_{c_0,{\rm typ}}$ can be interpreted as the superposition of the point process $\Phi_c$, an independent 1D PPP with density $\lambda_c$ on the line $L_x$ aligned along the $x$-axis and an atom at the origin. 

{Our main goal is to characterize the lengths of the shortest paths to the nearest point of the MPLCP in the sense of path distance for both the cases mentioned above.} We formally define the shortest path distance between two points as follows.
\begin{definition}
	{\em {(Shortest path distance.)}} The shortest path distance between two points ${\bf a}(x_1, y_1)$ and ${\bf b}(x_2, y_2)$ is defined as the sum of lengths of the line segments that constitute the shortest path $P$ from ${\bf a}$ to ${\bf b}$ and is denoted by $\ell({\bf a}, {\bf b})$.
\end{definition}

We denote the number of horizontal and vertical lines that intersect a region $A \subset \nbbR^2$ by $N_h(A)$ and $N_v(A)$, respectively. We denote the number of points of the MPLCP located in the set $A$ by $N_p(A)$. In this paper, we will denote the random variables by upper case letters and their corresponding realizations by lower case letters. For example, $W$ denotes a random variable, whereas $w$ denotes its realization. We defer the definition of other variables to later sections of the paper for better readability.

\section{Analytical Results} \label{sec:cdfrm_analytical}
{In this section, we will first characterize the distribution of the shortest path distance $T_m$ from the typical intersection of the MPLP to its nearest point of the MPLCP $\Phi_{c_0, {\rm int}}$ (under the palm distribution of the intersection points) in the sense of path distance. This will reveal a mathematical structure that will be useful later in the analysis of the shortest path distance measured with respect to the typical point of the MPLCP.}

\subsection{Shortest path distance from the typical intersection of the MPLP}
In this case, the length of the shortest path to any point located at $(x_i, y_i)$ is simply given by $z_i = | x_i| + |y_i|$, which is nothing but the first order Minkowski distance of the point from the origin. If the closest point of the MPLCP to the typical intersection (in the sense of path distance) is at a distance $t_m$, then there cannot be any point of the MPLCP at a location $(x,y)$ in $\mathbb{R}^2$ such that $|x| + |y| < t_m$. Thus, as depicted in Fig. \ref{fig:exclusionzone}, we obtain an exclusion zone $B_0$ formed by the intersection of the half-planes $x +y < t_m, -x + y < t_m, -x - y < t_m,$ and $x - y < t_m$. There can be no points on any of the line segments inside the square region $B_0$. In addition to $L_x$ and $L_y$, we know that there are a random number of lines that intersect the region $B_0$. From the construction of MPLP, it follows that the number of horizontal and vertical lines that intersect $B_0$ are Poisson distributed with mean $\lambda_l 2 t_m$. For a horizontal line located at a distance $y_l<t_m$ from the origin, the length of the line segment inside $B_0$ is given by $ 2t_m - 2y_l$. Similarly, for a vertical line at a distance $x_l<t_m$ from the origin, the length of the line segment inside $B_0$ is $ 2t_m - 2x_l$. Using these properties, we will now derive a closed-form expression for the CDF of the shortest path distance $T_m$ in the following theorem.

\begin{figure}
	\centering
	\includegraphics[width=.42\textwidth]{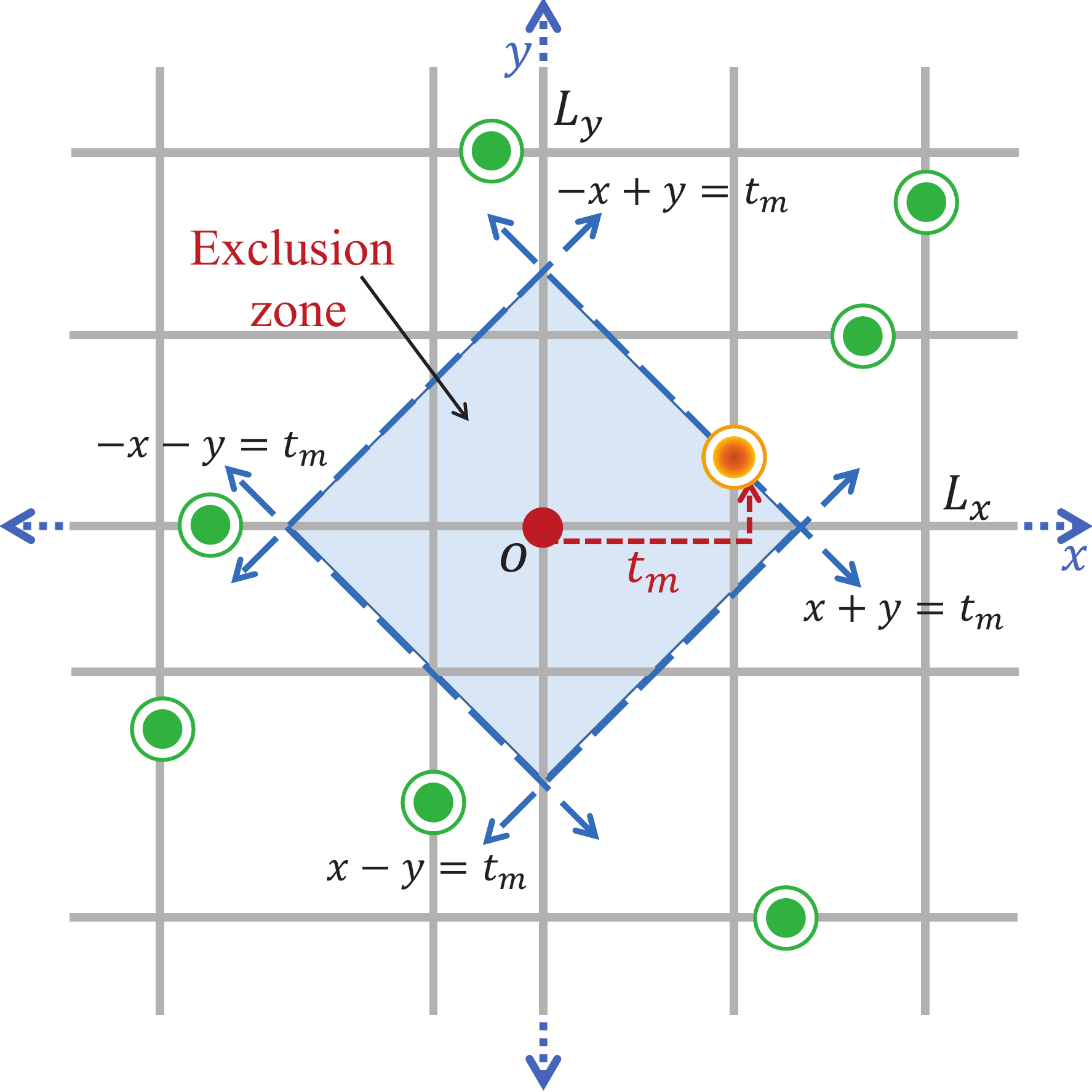}
	\caption{Illustration of the exclusion zone for the typical intersection.}
	\label{fig:exclusionzone}
\end{figure}

\begin{theorem}\label{theorem:rm_int_user}
	The CDF of the shortest path distance from the {typical intersection to its nearest point of the MPLCP} in the sense of path distance is 
	\begin{align} \label{eq:cdfrm_int}
	&F_{T_m}(t_m) = 1- \exp \left[ -4\lambda_c t_m -4\lambda_l t_m + \frac{2\lambda_l}{\lambda_c} \left(1 - e^{-2 \lambda_c t_m}\right)\right].
	\end{align}
\end{theorem}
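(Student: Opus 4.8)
The plan is to compute the complementary CDF $\P(T_m > t_m)$, which is exactly the probability that the exclusion zone $B_0 = \{(x,y) : |x|+|y| < t_m\}$ contains no point of $\Phi_{c_0,\mathrm{int}}$, and then report $F_{T_m}(t_m) = 1 - \P(T_m > t_m)$. The key structural observation is that $\Phi_{c_0,\mathrm{int}}$ is the independent superposition of (i) the Cox points on the two deterministic Palm lines $L_x$ and $L_y$ through the origin, and (ii) the Cox points carried by the random lines of $\Phi_l$. By this independence, the void probability over $B_0$ factorizes into the product of the void probabilities of these two families, so I would treat them separately.

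For the two added lines, each of $L_x$ and $L_y$ carries an independent homogeneous $1$D PPP of density $\lambda_c$, and each intersects $B_0$ in an axis segment of length $2t_m$. Hence the point-free probability on each is $\exp(-2\lambda_c t_m)$, and together they contribute the factor $\exp(-4\lambda_c t_m)$, which accounts for the $-4\lambda_c t_m$ term in the exponent. For the random lines I would first condition on the realization of $\Phi_l$. Given the lines, the Cox points on distinct lines are independent $1$D PPPs, so the conditional void probability is a product over all lines meeting $B_0$ of the per-line point-free probabilities. A horizontal line at signed height $y_l$ with $|y_l| < t_m$ meets $B_0$ in a chord of length $2(t_m - |y_l|)$, giving a point-free probability $\exp(-2\lambda_c(t_m - |y_l|))$ on it, and the vertical family behaves identically by symmetry.

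Averaging this conditional product over the line positions, which form the homogeneous PPPs $\Xi_y$ and $\Xi_x$ of density $\lambda_l$, is precisely an application of the probability generating functional of a Poisson process. This converts the expectation into $\exp\left(-\lambda_l \int_{-t_m}^{t_m}\left(1 - e^{-2\lambda_c(t_m - |y_l|)}\right)\,\mathrm{d}y_l\right)$ for the horizontal family, with an identical factor for the vertical family. The remaining step is the elementary integral $\int_{-t_m}^{t_m}\left(1 - e^{-2\lambda_c(t_m - |y_l|)}\right)\,\mathrm{d}y_l = 2t_m - \frac{1}{\lambda_c}\left(1 - e^{-2\lambda_c t_m}\right)$, obtained via the substitution $u = t_m - |y_l|$ together with the symmetry about $y_l = 0$. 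Multiplying the horizontal factor, the identical vertical factor, and the added-line factor $\exp(-4\lambda_c t_m)$, and then collecting the terms in the exponent, yields the stated expression.

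The computation itself is routine once the setup is fixed; the only point requiring genuine care is the bookkeeping of the independence structure, namely cleanly separating the two deterministic Palm lines from the random ones and invoking the conditional independence of the Cox points given $\Phi_l$ before applying the generating functional. I therefore expect this decomposition, rather than any of the subsequent integration, to be the main obstacle, and a careful statement of it is what makes the factorization of the void probability legitimate.
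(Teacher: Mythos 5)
Your proposal is correct and follows essentially the same route as the paper: identify the complementary event with the void probability of the exclusion zone $B_0$, factor out the two Palm lines $L_x$ and $L_y$ (contributing $e^{-4\lambda_c t_m}$), and treat the random horizontal and vertical families independently, each yielding the factor $\exp\left(-2\lambda_l t_m + \frac{\lambda_l}{\lambda_c}\left(1 - e^{-2\lambda_c t_m}\right)\right)$. The only cosmetic difference is that you invoke the probability generating functional of the Poisson line positions directly, whereas the paper derives that same exponential factor by hand, conditioning on the Poisson number of lines crossing $B_0$ (whose positions are then i.i.d.\ uniform) and summing the resulting series.
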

\begin{proof}
	The CDF of $T_m$ can be computed as
	\begin{align*}
	F_{T_m}(t_m) &= 1 - \P ( T_m > t_m) \\
	&= 1 - \P( N_p(\Phi_{l_0, {\rm int}} \cap B_0) = 0)\\
	&\stackrel{(a)}{=} 1 - \P\Big( N_p\left( \{L_x \cup \Phi_{lh} \} \cap B_0 \right) = 0 \Big)  \P\Big(N_p\left( \{L_y \cup \Phi_{lv} \} \cap B_0 \right) = 0 \Big)
	\\	
	&\stackrel{(b)}{=} 1 - \bigg[\P(N_p(L_x\cap B_0) = 0) \sum_{n_{hl}=0}^{\infty} \P(N_{h}(B_0\setminus L_x) = n_{hl}) \P\Big( N_p\left(\Phi_{lh} \cap B_0 \right) = 0 | N_h(B_0\setminus L_x) = n_{hl} \Big) \bigg] \\
	& \hspace{2em} \times \bigg[\P(N_p(L_y\cap B_0) = 0) \sum_{n_{vl}=0}^{\infty} \P(N_{v}(B_0\setminus L_y) = n_{vl})  \P\Big( N_p\left( \Phi_{lv} \cap B_0 \right) = 0 | N_v(B_0\setminus L_y) = n_{vl} \Big) \bigg]
	\\	
	&\stackrel{(c)}{=} 1 - \Bigg[\P(N_p(L_x \cap B_0) = 0) \sum_{n_{hl}=0}^{\infty} \P(N_{h}(B_0\setminus L_x) = n_{hl}) \left( \prod_{j=1}^{n_{hl}}  \P\left( N_p(L_{h_j} \cap B_0 ) = 0\right) \right) \Bigg] \\
	&\hspace{4em} \times \Bigg[\P(N_p(L_y\cap B_0) = 0) \mspace{-6mu} \sum_{n_{vl}=0}^{\infty} \mspace{-5mu} \P(N_{v}(B_0\setminus L_y) \mspace{-5mu}= n_{vl}) \left( \prod_{k=1}^{n_{vl}}  \P\left( N_p(L_{v_k} \cap B_0 ) \mspace{-5mu} = 0\right) \right) \Bigg]
	\\
	&\stackrel{(d)}{=} 1- \Bigg[e^{-2 \lambda_c t_m} \sum_{n_{hl}=0}^{\infty} \frac{e^{-2 \lambda_l t_m }(2 \lambda_l t_m)^{n_{hl}}}{n_{hl}! }  \left(\int_0^{t_m} \exp\left(-\lambda_c(2t_m - 2y) \right) \frac{{\rm d}y}{t_m} \right)^{n_{hl}}\Bigg]\\
	&\hspace{4em} \times \Bigg[e^{-2 \lambda_c t_m} \sum_{n_{vl}=0}^{\infty} \frac{e^{-2 \lambda_l t_m }(2 \lambda_l t_m)^{n_{vl}}}{n_{vl}! }  \left(\int_0^{t_m} \exp\left(-\lambda_c(2t_m - 2x) \right) \frac{{\rm d}x}{t_m} \right)^{n_{vl}}\Bigg]
	\\
	&=1 - \left[ e^{- 2\lambda_c t_m} e^{-2 \lambda_l t_m} \exp\left[2 \lambda_l \int_0^{t_m} e^{-2\lambda_c (t_m - y)} {\rm d}y\right] \right] \left[ e^{- 2\lambda_c t_m} e^{-2 \lambda_l t_m} \exp\left[2 \lambda_l \int_0^{t_m} e^{-2\lambda_c (t_m - x)} {\rm d}x\right] \right]
	\\
	&= 1- \exp \left[ -4\lambda_c t_m -4\lambda_l t_m + \frac{2\lambda_l}{\lambda_c} \left(1 - e^{-2 \lambda_c t_m}\right)\right],
	\end{align*}
	where (a) follows from the fact that the distribution of horizontal and vertical lines are independent, (b) follows from conditioning on the number of horizontal and vertical lines intersecting the region $B_0$, (c) follows from the independent distribution of points on the lines, and (d) follows from the Poisson distribution of the number of lines intersecting $B_0$ and the void probability of 1D PPP on each line.
\end{proof}

\subsection{{Shortest path distance from the typical point of the MPLCP}}

In this subsection, we derive the exact CDF of the shortest path distance $R_m$ from the typical point of the MPLCP located at the origin to its nearest neighbor (nearest point of the MPLCP) in the sense of path distance. 

\begin{figure}
	\centering
	\includegraphics[width=.47\textwidth]{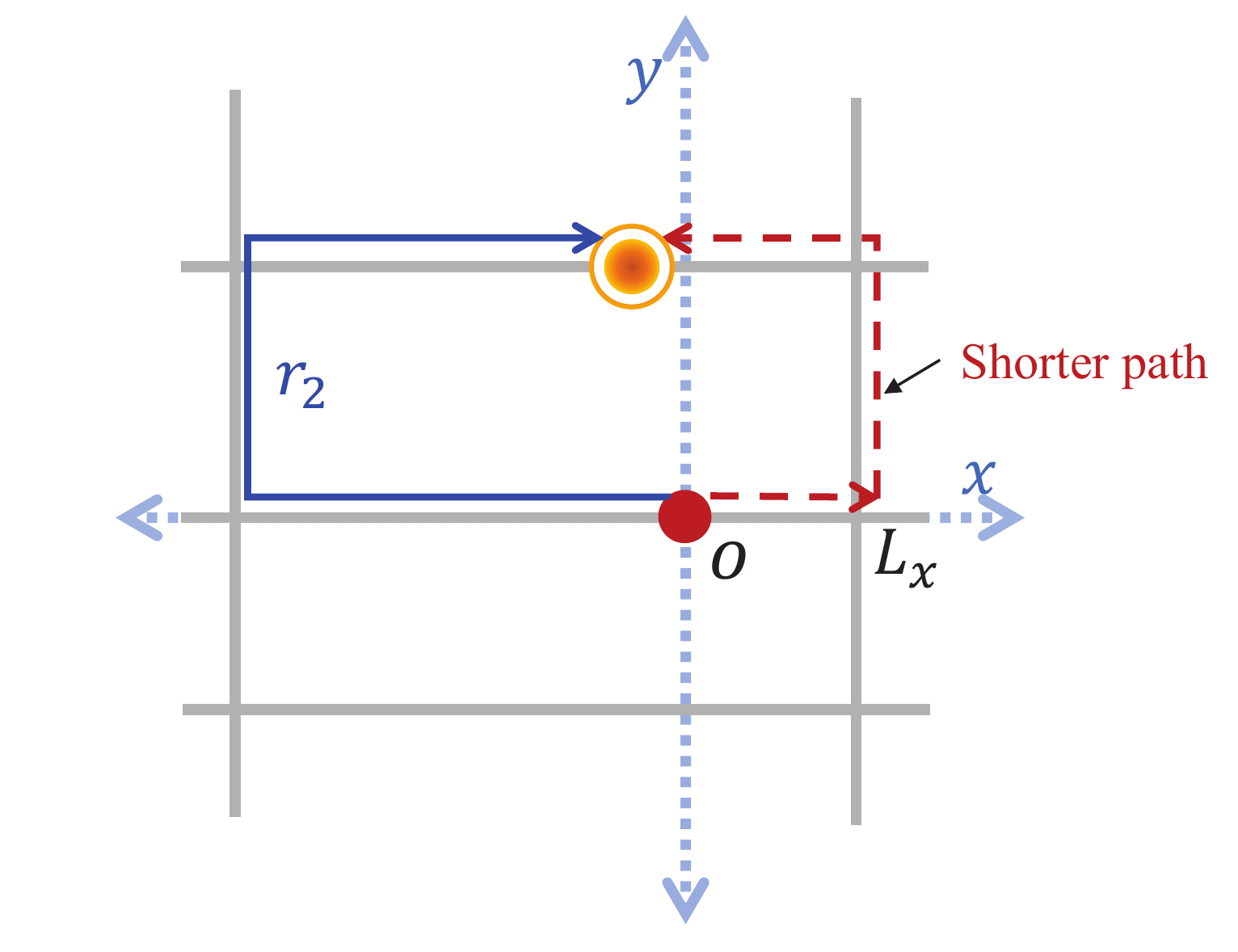}
	\caption{Illustration of the scenario in which the shortest path to a point to the left of
		the $y$-axis is the one that starts in the direction of positive $x$-axis.}
	\label{fig:exception}
\end{figure}

\begin{remark}\label{rem:deletion}
	The key difference between the spatial setup in this case and the previous case is that there does not exist a line $L_y$ along the $y$-axis in this setup. More precisely, the point process is now viewed under the regular distribution of $\Xi_x$ (which is equivalent to its reduced Palm distribution) and the Palm distribution of $\Xi_y$. 
\end{remark}


	We will now discuss the technical challenges involved in the characterization of the shortest path distance in this setting as a result of the key difference highlighted in Remark \ref{rem:deletion}. First of all, the shortest path distance from the origin to some of the points is greater than the first order Minkowski distance of the point. In particular, these points are the ones that are located on the horizontal lines (excluding the line $L_x$) between the nearest vertical lines to the origin on either side, as illustrated in Fig. \ref{fig:exception}. Further, in some scenarios, the shortest path to some of the points that are located to the left of the $y$-axis is the one that starts in the direction of the positive $x$-axis and vice versa. This is also illustrated in Fig. \ref{fig:exception}. We will address these challenges in our analysis and derive the exact CDF of the shortest path distance. We will introduce some key variables that will be used in our analysis next.
	
	Let us denote the distance to the nearest vertical line (or the nearest intersection) to the right and the left of the origin by $S_r$ and $S_l$, respectively. From the construction of the MPLP, the PDFs of $S_i$, $i \in\{ l, r\}$, are given by
	\begin{align}
	f_{S_i} = \lambda_l \exp(-\lambda_l s_i), \quad 0 \leq s_i < \infty.
	\end{align}
	We denote the distance of the closer and the farther of the two nearest intersections on either side of the origin by $X_1$ and $X_2$, respectively. Without loss of generality, we consider the closer intersection to be on the positive direction of the $x$-axis for the rest of our discussion. We denote the location of the nearest intersection to the right and left of the origin by $\nrmx_1$ and $\nrmx_2$, respectively. As $X_1 = \min \{S_r, S_l \}$ and $X_2 = \max \{S_r, S_l\}$, their marginal PDFs are given by
	\begin{align}
	f_{X_1} (x_1) &= 2 \lambda_l \exp(-2 \lambda_l x_1), \\
	f_{X_2} (x_2) &= 2 \lambda_l \exp(-\lambda_l x_2) ( 1- \exp(-\lambda_l x_2) ).
	\end{align}
	The joint PDF of $X_1$ and $X_2$ is given by
	\begin{align}\label{eq:pdf_x1x2}
	f_{X_1, X_2} ( x_1 , x_2) = 2 \lambda_l^2 \exp(-\lambda_l (x_1 + x_2)), \quad 0 \leq x_1 \leq x_2 < \infty.
	\end{align}
	We now denote the distance of the nearest point of the MPLCP on $L_x$ from the origin in the direction of $X_1$ and $X_2$ by $D_1$ and $D_2$, respectively. The CDF and PDF of $D_j$, $j \in \{1, 2\}$, are
	\begin{align}
	F_{D_j} = 1 - \exp (-\lambda_c d_j), \\
	f_{D_j}(d_j) = \lambda_c \exp(-\lambda_c d_j).
	\end{align}
	
	Based on these random distances, there are now four possibilities: (i) $D_1 \leq X_1$, $D_2 > X_2$, (ii) $D_1 \leq X_1$, $D_2 \leq X_2$, (iii) $D_1 > X_1$, $D_2 > X_2$, and (iv) $D_1 > X_1$, $D_2 \leq X_2$. We denote these four events by $\calE_1$, $\calE_2$, $\calE_3$, and $\calE_4$, respectively. We will now compute the CDF of $R_m$ conditioned on each of these events and the distances $X_1$ and $X_2$. We will then obtain the overall CDF of $R_m$ using the law of total probability and taking expectation w.r.t. $X_1$ and $X_2$ in the last step.
	
	\subsubsection{Conditioned on $\calE_1$}\label{sec:e1}
	The occurrence of the event $\calE_1$ implies that the distance of the closest point on $L_x$ to the right of the origin is smaller than the distance to the nearest intersection in that direction and also the distance of the nearest point of the MPLCP on $L_x$ in the negative direction of the $x$-axis is greater than the distance of the nearest intersection in that direction. Thus, the probability of occurrence of the event $\calE_1$ conditioned on $X_1$ and $X_2$ can be computed as
	\begin{align}
	\notag \P \left( \calE_1 | X_1, X_2 \right) &= \P \left( D_1 \leq x_1, D_2 > x_2 | X_1 , X_2\right)\\
	\notag & \stackrel{(a)}{=} \P( D_1 \leq x_1 ) \P( D_2 > x_2)\\
	&\label{eq:pe1_x1x2} = (1 - \exp( -\lambda_c x_1)) \exp(-\lambda_c x_2 ),
	\end{align}
	where (a) follows from the fact that $D_1$ and $D_2$ are independent of each other and are also independent of $X_1$ and $X_2$.
	
	Conditioned on the occurrence of the event $\calE_1$, the shortest path distance from the origin is equal to the distance of the nearest point of the MPLCP on $L_x$ in the direction of the positive $x$-axis, i.e. $R_m = D_1$. Thus, the CDF of $R_m$ conditioned on $\calE_1$, $X_1$, and $X_2$ can be computed as
	\begin{align}
	\notag F_{R_m}(r_m | \calE_1, x_1 , x_2 ) &= 1 - \P \left( R_m > r_m | \calE_1, X_1, X_2  \right)\\
	\notag &= 1 - \frac{1}{\P \left( \calE_1 | X_1, X_2 \right) } \P \left( R_m > r_m , \calE_1| X_1, X_2  \right) \\
	\notag &= 1 - \frac{1}{\P \left( \calE_1 | X_1, X_2 \right) } \P \left( D_1 > r_m,  D_1 \leq x_1, D_2 > x_2 | X_1, X_2  \right) \\
	\notag & = 1 - \frac{1}{\P \left( \calE_1 | X_1, X_2 \right) } \P \left( r_m < D_1 \leq x_1 \right) \P \left( D_2 > x_2 \right)\\
	\notag & = 1 - \frac{\left( \exp(-\lambda_c r_m) - \exp(-\lambda_c x_1) \right) \exp \left( -\lambda_c x_2 \right)}{(1 - \exp( -\lambda_c x_1)) \exp(-\lambda_c x_2 )}\\
	\label{eq:cdfrm_e1_x1x2} & = \begin{dcases}
	\frac{1 - \exp(-\lambda_c r_m)}{1 - \exp(-\lambda_c x_1)}, \quad &0 \leq r_m \leq x_1, \\
	1, & x_1 < r_m.
	\end{dcases}
	\end{align}
	
	\subsubsection{Conditioned on $\calE_2$}
	The occurrence of $\calE_2$ means that the distance of the nearest points of the MPLCP on $L_x$ on either side of the origin are smaller than the distances to the corresponding intersections. Thus, we obtain the probability of occurrence of $\calE_2$ conditioned on $X_1$ and $X_2$ as
	\begin{align}
	\notag \P \left( \calE_2 | X_1, X_2 \right) &= \P \left( D_1 \leq x_1, D_2 \leq x_2 | X_1 , X_2\right)\\
	\notag &= \P( D_1 \leq x_1 ) \P( D_2 \leq x_2)\\
	&\label{eq:pe2_x1x2} = (1 - \exp( -\lambda_c x_1)) (1 -\exp(-\lambda_c x_2 )).
	\end{align}
	
	Following the same procedure as in Section \ref{sec:e1}, the CDF of $R_m$ conditioned on $\calE_2$, $X_1$, and $X_2$ can be obtained as
	\begin{align}
	\notag F_{R_m}(r_m | \calE_2, x_1 , x_2 ) &= 1 - \P \left( R_m > r_m | \calE_2, X_1, X_2  \right)\\
	\notag &= 1 - \frac{1}{\P \left( \calE_2 | X_1, X_2 \right) } \P \left( R_m > r_m , \calE_2| X_1, X_2  \right) \\
	\notag &= 1 - \frac{1}{\P \left( \calE_2 | X_1, X_2 \right) } \P \left( \min \{D_1, D_2 \} > r_m,  D_1 \leq x_1, D_2 \leq x_2 | X_1, X_2  \right) \\
	\notag & = 1 - \frac{1}{\P \left( \calE_2 | X_1, X_2 \right) } \P \left( r_m < D_1 \leq x_1 \right) \P \left( r_m < D_2 \leq x_2 \right)\\
	\notag & = 1 - \frac{\left( \exp(-\lambda_c r_m) - \exp(-\lambda_c x_1) \right) \left( \exp(-\lambda_c r_m) - \exp(-\lambda_c x_2) \right)}{(1 - \exp( -\lambda_c x_1)) (1 - \exp( -\lambda_c x_2))}\\
	\label{eq:cdfrm_e2_x1x2} & = \begin{dcases}
	\frac{1 - e^{-2 \lambda_c r_m} - (1 - e^{-\lambda_c r_m})(e^{-\lambda_c x_1} + e^{-\lambda_c x_2} )}{(1 - e^{ -\lambda_c x_1}) (1 - e^{ -\lambda_c x_2 } )}, \quad &0 \leq r_m < x_1 \\
	1, & x_1 < r_m.
	\end{dcases}
	\end{align}

	\subsubsection{Conditioned on $\calE_3$} 
	In this case, the closest points of the MPLCP on the line $L_x$ on either side of the origin are farther than locations of the nearest intersections. Thus, the probability of occurrence of $\calE_3$ conditioned on $X_1$ and $X_2$ can be computed as
	\begin{align}
	\notag \P(\calE_3 | X_1, X_2) &= \P( D_1 > x_1, D_2 > x_2 | X_1, X_2) \\
	\notag &= \P (D_1 > x_1) \P (D_2 > x_2)\\
	\label{eq:pe3_x1x2}&= \exp \left( -\lambda_c (x_1 + x_2) \right).
	\end{align}
	
	\begin{figure}
		\centering
		\includegraphics[width=.47\textwidth]{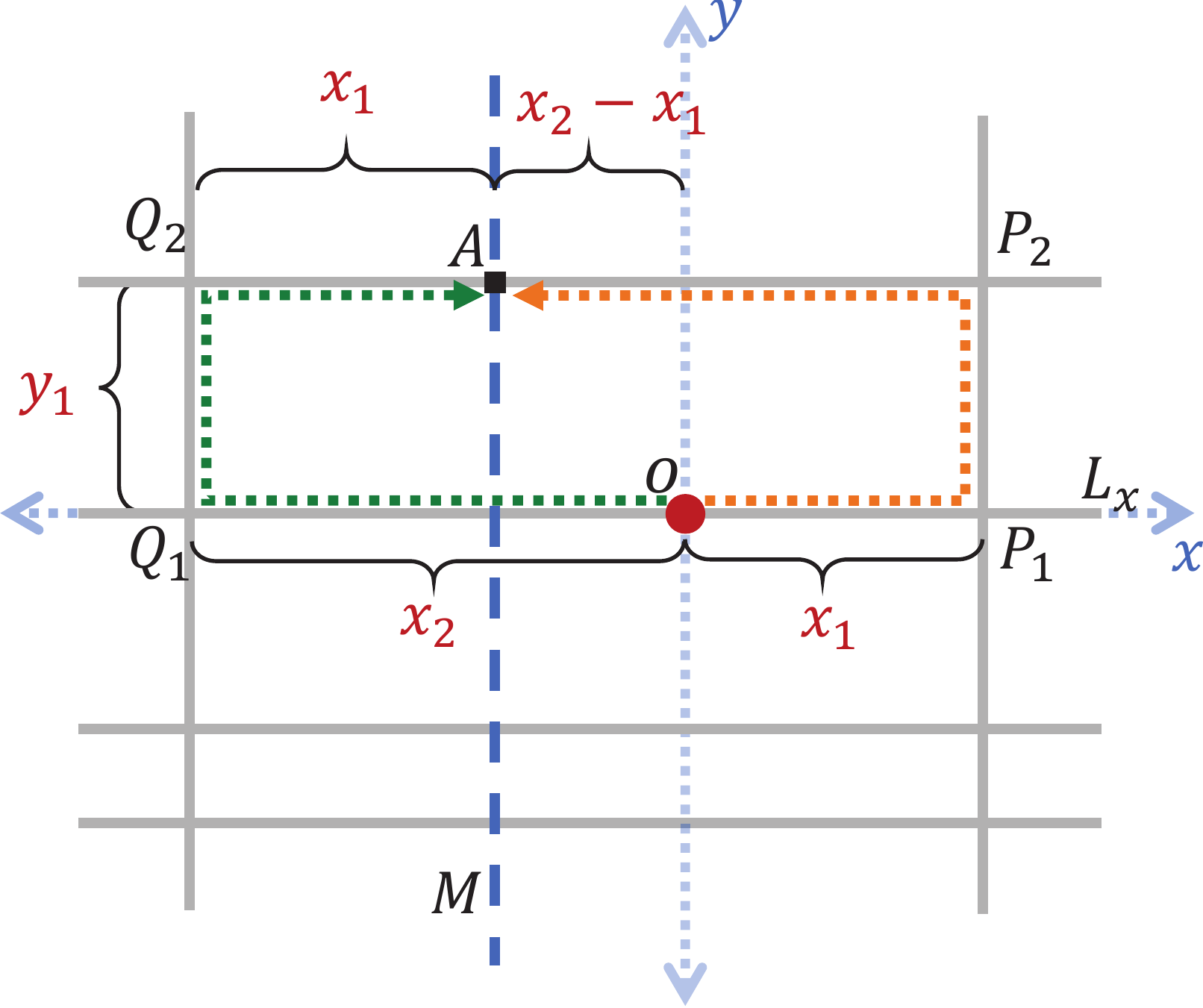}
		\caption{Illustration of the auxiliary line $M$ and the path distances from the origin to the intersection points of $M$ with the horizontal lines (except $L_x$).}
		\label{fig:construction}
	\end{figure}
	
	Recall that the key challenge in our analysis is the characterization of the shortest path distance for some of the points located to the left of the $y$-axis whose shortest path starts towards the positive $x$-axis and vice versa. This problem can be addressed by partitioning the points of the MPLCP into two sets based on the direction in which the shortest path to those points start from the origin. Conditioning on $X_1$ and $X_2$, we now divide the space into two half-planes using an auxiliary vertical line $M$ which is at a distance $(x_2 - x_1)$ to the left of the origin, as illustrated in Fig. \ref{fig:construction}. Let us now examine the shortest path to the points of intersection of this auxiliary line with the horizontal lines (excluding the line $L_x$). It can be observed that the path distances to these points obtained by starting to the right and the left of the origin are the same. For example, let us consider the intersection point $A$ shown in Fig. \ref{fig:construction}. Upon starting to the right of the origin, the shortest path distance to this point $A$ is $\overline{o P_1} + \overline{P_1 P_2} + \overline{P_2 A} = x_1 + y_1 + x_2$. Similarly, upon starting to the left of the origin, the shortest path distance to the point $A$ is $\overline{o Q_1} + \overline{Q_1 Q_2} + \overline{Q_2 A} = x_2 + y_1 + x_1$, which is equal to the shortest path distance when starting to the right of the origin. Therefore, the shortest path distances from the origin to the intersection points of the auxiliary line $M$ and the horizontal lines (except $L_x$) obtained by starting to the right and the left of the origin are the same. So, for all the points to the right of the line $M$, the shortest path from the origin starts to the right and likewise, for all the points to the left of this line, the shortest path starts towards the left of the origin.

	Based on the above construction, we now partition the points of the MPLCP into two sets based on the auxiliary line $M$ instead of the $y$-axis. So, we denote the shortest path distance to the origin upon starting to the right of the origin by $R_1 = X_1 + W_1$, where $W_1$ is the shortest path distance to the points located to the right of the auxiliary line $M$ from the intersection at $\nrmx_1$. Similarly, we denote the shortest path distance by starting to the left of the origin by $R_2 = X_2 + W_2$, where $W_2$ is the shortest path distance from the intersection at $\nrmx_2$ to the points located to the left of the auxiliary line $M$. Now, in order to compute the CDF of the overall shortest path distance, we need to determine the conditional CDFs of $W_1$ and $W_2$, which will be discussed next.

	The conditioning on $\calE_3$ already implies that there does not exist any point between the two intersections. This additional information about the distribution of points in the interval $(-x_2, x_1)$ on $L_x$ must be included in the computation of the conditional CDF of $W_1$. Similar to the procedure followed in the derivation of Theorem \ref{theorem:rm_int_user}, we will consider an exclusion zone $B$ formed by the intersection of the half-planes $(x-x_1)+y< w_1$, $-(x-x_1)+y<w_1$, $-(x-x_1)-y<w_1$, $(x-x_1)-y < w_1$, and $x > 0$. Note that the shape of the exclusion region $B$ depends on the values of $w_1$ with respect to $x_2$. While $B$ is a square for $w_1 \leq x_2$, it is a pentagon for $w_1 > x_2$, as shown in Figs. \ref{fig:w1_lt_x2} and \ref{fig:w1_gt_x2}, respectively. So, we will derive the conditional CDF of $W_1$ for the two cases $w_1 \leq x_2$ and $w_1 > x_2$ separately. We know that there cannot be any point on any of the line segments inside $B$. In addition to $L_x$, there exists a random number of horizontal lines above and below the line $L_x$ that intersect the region $B$. Likewise, in addition to the vertical line of the intersection $L_{v_0}$, there exists a random number of vertical lines that intersect the region $B$. However, conditioned on the event $\calE_3$, the distribution of vertical lines to the left of $L_{v_0}$ is not the same as the distribution of lines to the right of $L_{v_0}$. Since the first intersection to the right of the origin is at a distance $x_1$, there cannot be any vertical line that intersects $L_x$ in the interval $(-x_2, x_1)$, as shown in Fig. \ref{fig:w1_lt_x2}. So, we just need to focus on the set of vertical lines that intersect the region $B_{x_1^+} = B \cap \{x > x_1\}$. Due to symmetry, we obtain similar conditions for the computation of conditional CDF of $W_2$ as well. We will now derive the closed form expression for the conditional CDFs of $W_1$ and $W_2$ in the following Lemma.

	\begin{figure}
	\centering
	\includegraphics[width=.45\textwidth]{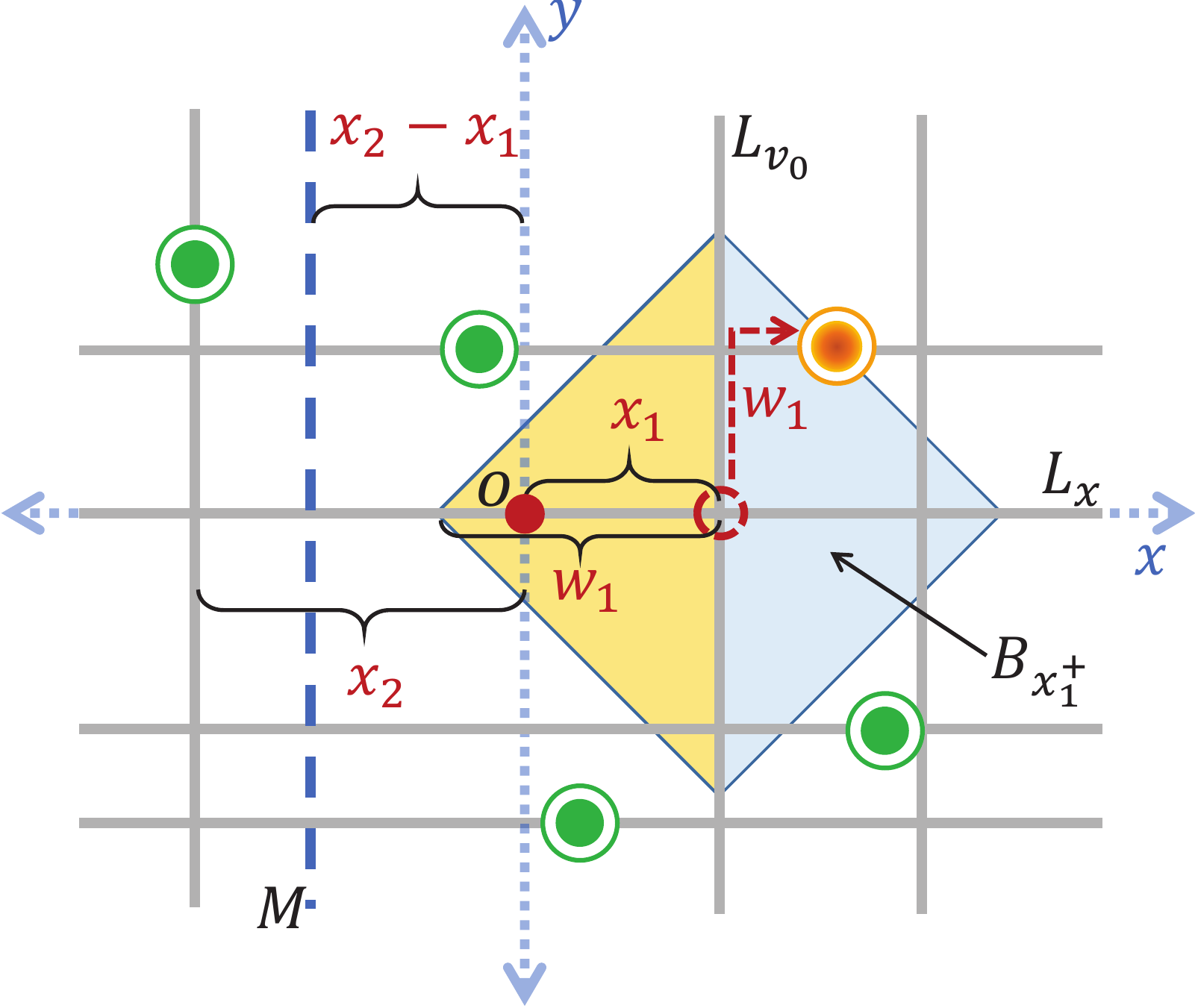}
	\caption{Illustration of the exclusion zone for the case $w_1 \leq x_2$.}
	\label{fig:w1_lt_x2}
	\end{figure}
	\begin{figure}
	\centering
	\includegraphics[width=.45\textwidth]{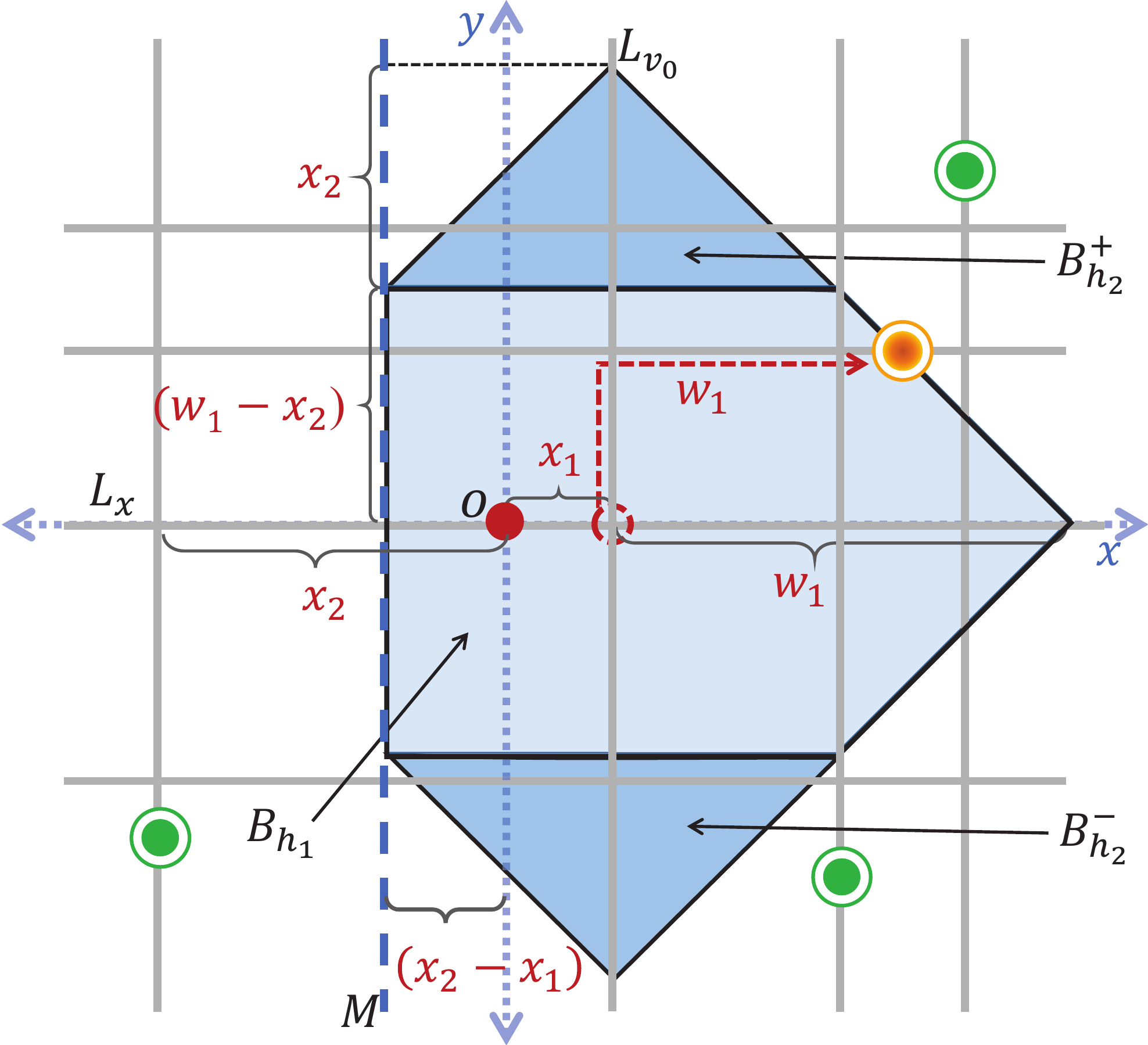}
	\caption{Illustration of the exclusion zone for the case $w_1 > x_2$.}
	\label{fig:w1_gt_x2}
	\end{figure}

\begin{lemma}\label{lem:cdfwm_e0}
	The CDFs of $W_1$ and $W_2$ conditioned on $\calE_3$, $X_1$, and $X_2$ are given by
	\begin{align}\label{eq:cdfwm_e0}
	F_{W_1}(w_1 | \calE_3 , x_1, x_2) = 
	\begin{dcases}
	F_{W_1, 1} (w_1 | \calE_3 , x_1, x_2), \quad 0 \leq w_1 \leq x_2 ,\\
	F_{W_1, 2} (w_1 | \calE_3 , x_1, x_2), \quad  w_1 > x_2, 
	\end{dcases}
	\end{align}
	where
	\begin{align} 
	F_{W_1, 1} (w_1 | \calE_3 , x_1, x_2)  &= 1 - \exp\Bigg[-3 \lambda_c w_1 - 3 \lambda_l w_1   +\frac{3 \lambda_l}{2 \lambda_c} \left(1 - e^{-2 \lambda_c w_1}\right)\Bigg],
	\end{align}
	and 
	\begin{align}
	F_{W_1, 2} (w_1 | \calE_3 , x_1, x_2)  = 1-\exp&\Bigg[- 3(\lambda_c + \lambda_l)w_1   +\frac{\lambda_l} {2\lambda_c} \big(3 + 2 e^{-2 \lambda_c x_2}- e^{-2 \lambda_c w_1} -4 e^{- \lambda_c (x_2+w_1)} \big)\Bigg],
	\end{align}
	
	\begin{align}\label{eq:cdfw2_e3}
	F_{W_2}(w_2 | \calE_3 , x_1, x_2) = 
	\begin{dcases}
	F_{W_2, 1} (w_2 | \calE_3 , x_1, x_2), \quad 0 \leq w_2 \leq x_1 ,\\
	F_{W_2, 2} (w_2 | \calE_3 , x_1, x_2), \quad  w_2 > x_1, 
	\end{dcases}
	\end{align}
	where
	\begin{align} 
	F_{W_2, 1} (w_2 | \calE_3 , x_1, x_2)  &= 1 - \exp\Bigg[-3 \lambda_c w_2 - 3 \lambda_l w_2   +\frac{3 \lambda_l}{2 \lambda_c} \left(1 - e^{-2 \lambda_c w_2}\right)\Bigg],
	\end{align}
	and 
	\begin{align}
	F_{W_2, 2} (w_2 | \calE_3 , x_1, x_2)  = 1-\exp&\Bigg[- 3(\lambda_c + \lambda_l)w_2  +\frac{\lambda_l} {2\lambda_c} \big(3 + 2 e^{-2 \lambda_c x_1}- e^{-2 \lambda_c w_2} -4 e^{- \lambda_c (x_1+w_2)} \big)\Bigg].
	\end{align}
\end{lemma}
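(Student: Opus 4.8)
\emph{Approach.} The plan is to follow the void-probability argument of Theorem~\ref{theorem:rm_int_user} verbatim in structure, but on the shifted and clipped exclusion zone $B$. Since $W_1$ is the path distance from the intersection at $\nrmx_1$ to the nearest point of the MPLCP lying to the right of the auxiliary line $M$, the event $\{W_1>w_1\}$ is exactly the event that $B$ contains no point of $\Phi_{c_0,{\rm typ}}$, where $B$ is the first-order Minkowski ball $|x-x_1|+|y|<w_1$ centered at $\nrmx_1$, intersected with the half-plane to the right of $M$. Hence I would write $F_{W_1}(w_1\mid\calE_3,x_1,x_2)=1-\P(W_1>w_1\mid\calE_3,x_1,x_2)$ and compute the latter as a void probability. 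The first step is to factor this void probability into a horizontal-line part and a vertical-line part using the independence of $\Phi_{lh}$ and $\Phi_{lv}$, exactly as in step~(a) of Theorem~\ref{theorem:rm_int_user}.

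\emph{Horizontal and vertical contributions.} For the horizontal part, the line $L_x$ contributes the void over the segment $L_x\cap B$ lying to the right of $\nrmx_1$, which has length $w_1$ and gives a factor $e^{-\lambda_c w_1}$; the portion of $L_x\cap B$ to the left of $\nrmx_1$ falls inside the interval $(-x_2,x_1)$, which is already empty under $\calE_3$ and so contributes nothing. The remaining horizontal lines form a PPP of density $\lambda_l$ above and below $L_x$, unaffected by $\calE_3$, and I would integrate their per-line void probabilities against this intensity by the same Poisson-sum manipulation as in Theorem~\ref{theorem:rm_int_user}, producing a factor of the form $\exp[-2\lambda_l\int_0^{w_1}(1-e^{-2\lambda_c(w_1-y)})\,{\rm d}y]$ in the square case (the factor $2$ accounting for lines on both sides). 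For the vertical part, the line $L_{v_0}$ through $\nrmx_1$ contributes the void over its full chord of length $2w_1$, giving $e^{-2\lambda_c w_1}$. The key asymmetry from $\calE_3$ enters here: there is no vertical line strictly inside $(-x_2,x_1)$, and for the relevant ranges of $w_1$ the region $B$ never reaches the far intersection at $-x_2$, so only the vertical lines to the right of $L_{v_0}$ (a \emph{one-sided} PPP of density $\lambda_l$, which remains Poisson beyond the nearest intersection by the memoryless property) contribute. This one-sidedness is precisely what yields the exponent coefficients $-3\lambda_c w_1-3\lambda_l w_1$: the $\lambda_c$-coefficient $3=1+2$ comes from the length $w_1$ on $L_x$ and the length $2w_1$ on $L_{v_0}$, while the $\lambda_l$-coefficient $3=2+1$ comes from the two-sided horizontal family and the one-sided vertical family.

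\emph{Case split and symmetry.} The only genuine branch is the geometry of $B$, split according to Figs.~\ref{fig:w1_lt_x2} and~\ref{fig:w1_gt_x2}. For $w_1\le x_2$ the left tip of the diamond stays to the right of $M$, so $B$ is a full square, every horizontal chord has length $2(w_1-y)$, and evaluating the elementary integrals gives $F_{W_1,1}$ (note this branch carries no dependence on $x_1,x_2$). For $w_1>x_2$ the diamond is clipped by $M$ at $x=x_1-x_2$ into a pentagon, so the horizontal chord length becomes $x_2+w_1-y$ for $y\in(0,w_1-x_2)$ and $2(w_1-y)$ for $y\in(w_1-x_2,w_1)$; integrating this piecewise introduces the extra $e^{-2\lambda_c x_2}$ and $e^{-\lambda_c(x_2+w_1)}$ terms of $F_{W_1,2}$, while the vertical contribution is identical to the square case because the clip only touches the left side. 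I expect this piecewise horizontal integral in the pentagon case to be the main obstacle, as it is where the $x_2$-dependence first appears and where sign and exponent bookkeeping is most error-prone. Finally, $F_{W_2}$ follows by the mirror symmetry $\nrmx_1\leftrightarrow\nrmx_2$ about $M$: $W_2$ is measured from $\nrmx_2$, whose distance to $M$ is $x_1$, so the case boundary moves to $w_2=x_1$ and the roles of $x_1$ and $x_2$ are interchanged, delivering $F_{W_2,1}$ and $F_{W_2,2}$ with no further computation.
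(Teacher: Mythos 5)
Your proposal is correct and follows essentially the same route as the paper's proof: a void-probability computation over the Minkowski diamond clipped by the auxiliary line $M$, factored by independence into the $L_x$ contribution (length $w_1$, using that $(-x_2,x_1)$ is empty under $\calE_3$), the $L_{v_0}$ chord of length $2w_1$, the two-sided horizontal Poisson family, and the one-sided vertical family, with the same square/pentagon case split at $w_1=x_2$, the same piecewise chord lengths $x_2+w_1-|y|$ and $2(w_1-|y|)$, and $W_2$ obtained by the mirror symmetry $x_1\leftrightarrow x_2$. Your exponent bookkeeping ($3\lambda_c=1+2$, $3\lambda_l=2+1$, and the $e^{-2\lambda_c x_2}$, $e^{-\lambda_c(x_2+w_1)}$ terms in the pentagon case) reproduces the paper's expressions exactly; your reading of the clipping half-plane as $x>x_1-x_2$ (rather than the $x>0$ stated in the paper's text) is the one consistent with the paper's figures, case boundary, and final formulas.
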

\begin{proof}
	The conditional CDF of $W_1$ can be computed as 
	\begin{align}\notag
	F_{W_1}(w_1 | \calE_3, x_1, x_2) &= 1 - \P ( W_1 > w_1| \calE_3, X_1, X_2) \\ \label{eq:gen_gen_void_expr}
	&= 1 - \P( N_p(B) = 0| \calE_3, X_1, X_2).
	\end{align}
	As we had discussed earlier, the shape of the exclusion zone $B$ is different for the two cases $w_1 \leq x_2$ and $w_1 > x_2$ and hence we will handle these two cases separately. We will first consider the case $w_1 \leq x_2$. In this case, $B$ is a square region and we now need to determine the probability that there are no points inside this square region centered at an intersection, as shown in Fig. \ref{fig:w1_lt_x2}. By expressing the conditional void probability in \eqref{eq:gen_gen_void_expr} as the product of void probabilities of independent individual components, as in the proof of Theorem \ref{theorem:rm_int_user}, we obtain 
	\begin{align}\notag 
	\P&( N_p(B) = 0| \calE_3, X_1, X_2) \\ \notag
	&=\P(N_p(L_x\cap B) = 0 | \calE_3, X_1, X_2) \bigg[\sum_{n_{hl}=0}^{\infty} \P(N_{h}(B\setminus L_x) = n_{hl}| \calE_3, X_1, X_2) \\\notag
	&\hspace{1em}  \times \P\Big( N_p\left( \Phi_{lh} \cap B \right) = 0 | N_h(B\setminus L_x) = n_{hl}, \calE_3, X_1, X_2 \Big) \bigg] \\ \notag 
	&\hspace{1em} \times \P(N_p(L_{v_0}\cap B) = 0 | \calE_3, X_1, X_2) 
	\bigg[\sum_{n_{vl}=0}^{\infty} \P(N_{v}(B_{x_1^+}) = n_{vl}| \calE_3, X_1, X_2) \\ \notag 
	&\hspace{2em} \times \P\Big( N_p\left( \Phi_{lv} \cap B_{x_1^+} \right) = 0 | N_v(B_{x_1^+}) = n_{vl}, \calE_3, X_1, X_2\Big) \bigg]\\	 	\notag
	&\stackrel{(a)}{=} \P(N_p(L_x\cap B) = 0 | \calE_3, X_1, X_2) 
	\bigg[\sum_{n_{hl}=0}^{\infty} \P(N_{h}(B\setminus L_x) = n_{hl})  \P\Big( N_p\left(\Phi_{lh} \cap B \right) = 0 | N_h(B\setminus L_x) = n_{hl} \Big) \bigg]\\ \notag
	&\hspace{2em}\times\P(N_p(L_{v_0}\cap B) = 0)\bigg[ \sum_{n_{vl}=0}^{\infty} \P(N_{v}(B_{x_1^+}) = n_{vl})   \P\Big( N_p\left( \Phi_{lv}\cap B_{x_1^+} \right) = 0 | N_v(B_{x_1^+}) = n_{vl}\Big) \bigg]
	\\ \notag	
	&\stackrel{(b)}{=}  \P(N_p(L_x\cap B) = 0 | \calE_3, X_1, X_2)
	\Bigg[\sum_{n_{hl}=0}^{\infty} \P(N_{h}(B\setminus L_x) = n_{hl})  \left( \prod_{j=1}^{n_{hl}}  \P\left( N_p(L_{h_j} \cap B ) = 0\right) \right) \Bigg]\\ \notag
	&\hspace{2em} \times 
	\P(N_p(L_{v_0}\cap B) = 0) \Bigg[\sum_{n_{vl}=0}^{\infty} \P(N_{v}(B_{x_1^+}) = n_{vl}) 
	\left( \prod_{k=1}^{n_{vl}}  \P\left( N_p(L_{v_k} \cap B_{x_1^+} ) = 0 \right) \right) \Bigg]
	\\ \notag
	&\stackrel{(c)}{=} e^{- \lambda_c w_1}\Bigg[ \sum_{n_{hl}=0}^{\infty} \frac{e^{-2 \lambda_l w_1 }(2 \lambda_l w_1)^{n_{hl}}}{n_{hl}! } \left(\int_0^{w_1} \exp\left(-\lambda_c(2w_1 - 2y) \right) \frac{{\rm d}y}{w_1} \right)^{n_{hl}}\Bigg] \\ \notag
	&\hspace{3em} \times e^{-2 \lambda_c w_1}  \Bigg[ \sum_{n_{vl}=0}^{\infty} \frac{e^{- \lambda_l w_1 }( \lambda_l w_1)^{n_{vl}}}{n_{vl}! }
	\left(\int_0^{w_1} \exp\left(-\lambda_c(2w_1 - 2x) \right) \frac{{\rm d}x}{w_1} \right)^{n_{vl}}\Bigg]
	\\ \notag
	&= \left[ e^{- \lambda_c w_1} e^{-2 \lambda_l w_1} \exp\left[2 \lambda_l \int_0^{w_1} e^{-2\lambda_c (w_1 - y)} {\rm d}y\right] \right]  \left[ e^{- 2\lambda_c w_1} e^{- \lambda_l w_1} \exp\left[ \lambda_l \int_0^{w_1} e^{-2\lambda_c (w_1 - x)} {\rm d}x\right] \right]
	\\ \label{eq:voidprob_case1}
	&= \exp \left[ -3\lambda_c w_1 -3\lambda_l w_1 + \frac{3\lambda_l}{2\lambda_c} \left(1 - e^{-2 \lambda_c w_1}\right)\right],
	\end{align}
	where (a) follows from the fact that the distribution of points on the random horizontal lines, random vertical lines intersecting $B_{x_1^+}$, and the line $L_{v_0}$ is independent of $\calE_3$, $X_1$, and $X_2$, (b) follows from the independent distribution of points over different lines, and (c) follows from the Poisson distribution of lines and the void probability of 1D PPPs on those lines. 
	Substituting \eqref{eq:voidprob_case1} in \eqref{eq:gen_gen_void_expr}, we obtain the expression for the conditional CDF of $W_1$ for the case $w_1\leq x_2$.
	
	We will now consider the case $w_1 > x_2$, where the exclusion region $B$ is a pentagon as depicted in Fig. \ref{fig:w1_gt_x2}. The length of the horizontal line segment inside $B$ depends on the distance of the line from the origin. For a horizontal line $L_h$ which intercepts the $y$-axis at $y_h$ such that $|y_h| < w_1-x_2$, the length of the line segment inside $B$ is given by $x_2 + w_1-|y_h|$. On the other hand, if $|y_h| \geq w_1 - x_2$, then the length of line segment inside $B$ is $2(w_1 - |y_h|)$. So, we partition the set of horizontal lines that intersect $B$ into two sets: (i) the set of horizontal lines that intersect the region $B_{h_1} = B \cap \{|y|<w_1 - x_2\}$, and (ii) the set of horizontal lines that intersect the region $B_{h_2} = B \cap \{|y| \geq w_1 -x_2\}$. As $B_{h_2}$ is composed of two non-contiguous regions $B_{h_2}^+ = B \cap \{y \geq w_1 -x_2\}$ and $B_{h_2}^- = B \cap \{y \leq -(w_1 -x_2)\}$, we will handle them separately in our analysis. Thus, the conditional void probability for the case $w_1 > x_2$ can be computed as	
	\begin{align}\notag 
	\P&( N_p(B) = 0| \calE_3, X_1, X_2)\\ \notag 
	&=\P(N_p(L_x\cap B) = 0 | \calE_3, X_1, X_2) 
	\bigg[\sum_{n_{h_1}=0}^{\infty} \P\Big(N_{h} \big(B_{h_1} \setminus L_x \big) = n_{h_1}| \calE_3, X_1, X_2\Big) \\\notag
	&  \hspace{5em} \times 
	\P\Big( N_p\left( \Phi_{lh} \cap B_{h_1} \right) = 0 | N_h(B_{h_1} \setminus L_x) = n_{h_1}, \calE_3, X_1, X_2 \Big) \bigg] \\ \notag 
	&  \hspace{5em}\times 
	\bigg[\sum_{n_{h_2}=0}^{\infty} \P\Big(N_{h} \big(B_{h_2}^+ \big) = n_{h_2}| \calE_3, X_1, X_2\Big) 
	\P\Big( N_p\left( \Phi_{lh} \cap B_{h_2}^+ \right) = 0 | N_h(B_{h_2}^+ ) = n_{h_2},\calE_3, X_1, X_2 \Big) \\\notag
	& \hspace{5em} \times 
	\sum_{n_{h_3}=0}^{\infty} \P\Big(N_{h} \big(B_{h_2}^- \big) = n_{h_3}| \calE_3, X_1, X_2\Big) 	\P\Big( N_p\left( \Phi_{lh} \cap B_{h_2}^- \right) = 0 | N_h(B_{h_2}^- ) = n_{h_3},\calE_3, X_1, X_2 \Big) \bigg]\\\notag
	&  \hspace{5em} \times 
	\P(N_p(L_{v_0}\cap B) = 0 | \calE_3, X_1, X_2) 
	\bigg[\sum_{n_{vl}=0}^{\infty} \P(N_{v}(B_{x_1^+}) = n_{vl}| \calE_3, X_1, X_2) \\ \notag 
	& \hspace{5em} \times 
	\P\Big( N_p\left( \Phi_{lv} \cap B_{x_1^+} \right) = 0 | N_v(B_{x_1^+}) = n_{vl}, \calE_3, X_1, X_2\Big) \bigg]\\
	\notag
	&\stackrel{(a)}{=}   \P(N_p(L_x\cap B) = 0 | \calE_3, X_1, X_2)  \P(N_p(L_{v_0}\cap B) = 0)\\
	\notag & \hspace{5em} \times 
	\bigg[\sum_{n_{h_1}=0}^{\infty} \P\Big(N_{h} \big(B_{h_1} \setminus L_x \big) = n_{h_1} | X_1, X_2\Big)  \P\Big( N_p\left( \Phi_{lh} \cap B_{h_1} \right) = 0 | N_h(B_{h_1} \setminus L_x) = n_{h_1}, X_1, X_2\Big) \bigg]\\
	\notag & \hspace{5em} \times 
	\bigg[\sum_{n_{h_2}=0}^{\infty} \P\Big(N_{h} \big(B_{h_2}^+ \big) = n_{h_2}| X_1, X_2\Big) \P\Big( N_p\left( \Phi_{lh} \cap B_{h_2}^+ \right) = 0 | N_h(B_{h_2}^+ ) = n_{h_2}, X_1, X_2 \Big) \\
	\notag & \hspace{5em} \times 
	\sum_{n_{h_3}=0}^{\infty} \P\Big(N_{h} \big(B_{h_2}^- \big) = n_{h_3}| X_1, X_2\Big)  \P\Big( N_p\left( \Phi_{lh} \cap B_{h_2}^- \right) = 0 | N_h(B_{h_2}^- ) = n_{h_3} , X_1, X_2 \Big) \bigg]\\
	\notag & \hspace{5em} \times
	\bigg[ \sum_{n_{vl}=0}^{\infty} \P(N_{v}(B_{x_1^+}) = n_{vl}|X_1, X_2)  \P\Big(N_p( \Phi_{lv} \cap B_{x_1^+} ) = 0 \big|  N_v(B_{x_1^+}) = n_{vl}, X_1, X_2\Big)  \bigg],\\ \notag
	&\stackrel{(b)}{=}  \P(N_p(L_x\cap B) = 0 | \calE_3, X_1, X_2)  \P(N_p(L_{v_0}\cap B) = 0)\\ \notag
	&\hspace{5em}\times \Bigg[ \sum_{n_{h_1}=0}^{\infty}\P(N_{h}(B_{h_1} \setminus L_x) = n_{h_1}| X_1, X_2) 
	\left( \prod_{i=1}^{n_{h_1}}  \P\left( N_p(L_{h_i} \cap B_{h_1} ) = 0 | X_1, X_2 \right) \right) \Bigg]\\ \notag
	&\hspace{5em}\times \Bigg[ \sum_{n_{h_2}=0}^{\infty}\P(N_{h}(B_{h_2}^+) = n_{h_2}|X_1, X_2) 
	\left( \prod_{j=1}^{n_{h_2}}  \P\left( N_p(L_{h_j} \cap B_{h_2}^+ ) = 0|X_1, X_2\right) \right) \\ \notag	
	&\hspace{5em}\times  \sum_{n_{h_3}=0}^{\infty}\P(N_{h}(B_{h_2}^-) = n_{h_3}|X_1, X_2) 
	\left( \prod_{j=1}^{n_{h_3}}  \P\left( N_p(L_{h_j} \cap B_{h_2}^- ) = 0|X_1, X_2\right) \right) \Bigg]\\ \notag
	&\hspace{5em} \times \Bigg[ \sum_{n_{vl}=0}^{\infty} \P(N_{v}(B_{x_1^+}) = n_{vl})
	\left( \prod_{k=1}^{n_{vl}}  \P\left( N_p(L_{v_k} \cap B_{x_1^+} ) = 0 \right) \right) \Bigg]
	\\
	\notag
	&\stackrel{(c)}{=}  e^{- \lambda_c w_1}  e^{-2 \lambda_c w_1} 
	\Bigg[ \sum_{n_{h_1}=0}^{\infty} \frac{e^{-2 \lambda_l (w_1-x_2) }(2 \lambda_l (w_1-x_2))^{n_{h_1}}}{n_{h_1}! } 
	\left(\int_0^{w_1-x_2} \mspace{-20mu} e^{-\lambda_c(x_2+w_1 - y) } \frac{{\rm d}y}{(w_1-x_2)} \right)^{n_{h_1}}\Bigg]\\
	 \notag
	&\hspace{5em} \times
	 \Bigg[ \sum_{n_{h_2}=0}^{\infty} \frac{e^{- \lambda_l x_2 }(\lambda_l x_2)^{n_{h_2}}}{n_{h_2}! }  \left(\int_{w_1-x_2}^{w_1} \mspace{-20mu}e^{-\lambda_c(2w_1 - 2y)  }\frac{{\rm d}y}{x_2} \right)^{n_{h_2}}\\
	\notag 
	&\hspace{5em} \times
	\sum_{n_{h_3}=0}^{\infty} \frac{e^{- \lambda_l x_2 }(\lambda_l x_2)^{n_{h_3}}}{n_{h_3}! } \left(\int_{w_1-x_2}^{w_1} \mspace{-20mu}e^{-\lambda_c(2w_1 - 2y)  }\frac{{\rm d}y}{x_2} \right)^{n_{h_3}}\Bigg] \\ \notag 
	& \hspace{5em}\times	
	\Bigg[\sum_{n_{vl}=0}^{\infty} \frac{e^{- \lambda_l w_1 }( \lambda_l w_1)^{n_{vl}}}{n_{vl}! }  \left(\int_0^{w_1} \mspace{-6mu} e^{-\lambda_c(2w_1 - 2x) } \frac{{\rm d}x}{w_1} \right)^{n_{vl}} \Bigg]\\
	\label{eq:voidprob_case2}
	&= \exp\Bigg[- 3(\lambda_c + \lambda_l)w_1  +\frac{\lambda_l} {2\lambda_c} \big(3 + 2 e^{-2 \lambda_c x_2} - e^{-2 \lambda_c w_1} -4 e^{- \lambda_c (x_2+w_1)} \big)\Bigg],
	\end{align}
	where (a) follows from the fact that the distribution of points on the random horizontal lines, vertical lines intersecting $B_{x_1^+}$ and the line $L_{v_0}$ are independent of $\calE_3$, (b) follows from the independent distribution of points over lines, and (c) follows from the Poisson distribution of lines and the void probability of 1D PPPs on each of those lines. Substituting \eqref{eq:voidprob_case2} in \eqref{eq:gen_gen_void_expr}, we obtain the expression for the conditional CDF of $W_1$ for the case $w_1 > x_2$. The CDF of $W_2$ conditioned on $\calE_3$, $X_1$, and $X_2$ can be obtained by following the same procedure. This completes the proof.
\end{proof}

Having determined all the components required to compute the CDF of $R_1$ and $R_2$ conditioned on $\calE_3$, $X_1$, and $X_2$, the conditional CDF of $R_m$ can now be computed as
\begin{align}
\notag & F_{R_m}\left( r_m | \calE_3, x_1, x_2 \right) = 1 - \P \left( R_m > r_m | \calE_3, X_1, X_2 \right)\\
\notag &= 1 -  \P \left( \min\{R_1, R_2\} > r_m  | \calE_3, X_1, X_2 \right)\\
\notag &= 1 - { \P \left( x_1 + W_1 > r_m , x_2 + W_2 > r_m |  \calE_3, X_1, X_2 \right)} \\
\notag &=1 - \left( 1 - F_{W_1} (r_m  - x_1 | \calE_3, x_1, x_2 ) \right)  \left( 1 - F_{W_2} (r_m  - x_2 | \calE_3, x_1, x_2  ) \right)\\
\notag &=\begin{dcases}
F_{W_1} ( r_m - x_1 | \calE_3, x_1, x_2 ), \quad & 0 \leq x_1 \leq r_m  < x_2,\\
F_{W_1} ( r_m - x_1 | \calE_3, x_1, x_2  ) + F_{W_2} ( r_m - x_2 | \calE_3, x_1, x_2 ) &\\
\hspace{3em} - F_{W_1} ( r_m - x_1 | \calE_3, x_1, x_2 )F_{W_2} ( r_m - x_2 | \calE_3, x_1, x_2 ),  \  & x_2 \leq r_m < \infty 
\end{dcases}\\
&\label{eq:cdfrm_e3_x1x2}=\begin{dcases}
F_{W_1,1} (r_m - x_1| \calE_3, x_1, x_2 ), &0 \leq x_1 \leq r_m  < x_2,\\
F_{W_1,1} (r_m - x_1| \calE_3, x_1, x_2 ) + F_{W_2,1}(r_m - x_2| \calE_3, x_1, x_2 )  &\\
\hspace{3em} - F_{W_1,1} (r_m - x_1| \calE_3, x_1, x_2 ) F_{W_2,1}(r_m - x_2| \calE_3, x_1, x_2 ), \ & x_2 \leq r_m < x_1 + x_2, \\
F_{W_1,2} (r_m - x_1| \calE_3, x_1, x_2 ) + F_{W_2,2}(r_m - x_2| \calE_3, x_1, x_2 )  &\\
\hspace{3em} - F_{W_1,2} (r_m - x_1| \calE_3, x_1, x_2 ) F_{W_2,2}(r_m - x_2| \calE_3, x_1, x_2 ), \ & x_1 + x_2 \leq r_m < \infty.
\end{dcases}
\end{align}

\subsubsection{Conditioned on $\calE_4$}
In this case, the nearest point on $L_x$ on the right side of the origin is farther than the corresponding intersection $\nrmx_1$, whereas the nearest point on $L_x$ to the left of the origin is closer than the corresponding intersection $\nrmx_2$. Thus, the probability of $\calE_4$ conditioned on $X_1$ and $X_2$ is
\begin{align}
\notag \P \left( \calE_4 | X_1, X_2\right) &= \P \left( D_1 > x_1, D_2 \leq x_2 | X_1, X_2 \right)\\
\notag &  =\P ( D_1 > x_1) \P (D_2 \leq x_2)\\
\label{eq:pe4_x1x2}
&= \exp(-\lambda_c x_1) \left( 1 - \exp(-\lambda_c x_2) \right) .
\end{align}

Similar to the previous case, conditioned on $\calE_4$, $X_1$, and $X_2$, the shortest path distance from the origin upon starting to the right is $R_1 = x_1 + Z_1$, where $Z_1$ is the shortest path distance from the intersection $\nrmx_1$. Note that the conditional CDF of $Z_1$ is the same as that of $W_1$ given in Lemma \ref{lem:cdfwm_e0}, i.e.,
\begin{align}
F_{Z_1}(z_1 | \calE_4 , x_1, x_2) = 
\begin{dcases}
F_{Z_1, 1} (z_1 | \calE_4 , x_1, x_2), \quad 0 \leq z_1 \leq x_2 ,\\
F_{Z_1, 2} (z_1 | \calE_4 , x_1, x_2), \quad  z_1 > x_2, 
\end{dcases}
\end{align}
where
\begin{align} 
F_{Z_1, 1} (z_1 | \calE_4 , x_1, x_2)  &= 1 - \exp\Bigg[-3 \lambda_c z_1 - 3 \lambda_l z_1   +\frac{3 \lambda_l}{2 \lambda_c} \left(1 - e^{-2 \lambda_c z_1}\right)\Bigg],
\end{align}
and 
\begin{align}
F_{Z_1, 2} (z_1 | \calE_4 , x_1, x_2)  = 1-\exp&\Bigg[- 3(\lambda_c + \lambda_l)z_1    +\frac{\lambda_l} {2\lambda_c} \big(3 + 2 e^{-2 \lambda_c x_2}- e^{-2 \lambda_c z_1} -4 e^{- \lambda_c (x_2+z_1)} \big)\Bigg].
\end{align}
However, the shortest path distance upon starting to the left is $D_2$. Therefore, the shortest path distance is $R_m = \min \{ R_1, D_2\}$ and its CDF can be computed as

\begin{align}
\notag F_{R_m} \left( r_m | \calE_4, x_1, x_2\right) & = 1 - \P \left( R_m > r_m | \calE_4, X_1, X_2\right)\\
\notag &= 1 - \P \left( \min\{R_1, D_2\} > r_m| \calE_4, X_1, X_2 \right) \\
\notag &= 1 - \P ( x_1 + Z_1 > r_m | \calE_4, X_1, X_2) \P ( D_2 > r_m | \calE_4, X_1, X_2 )\\
\notag &= 1 - \left( 1 - F_{Z_1}(r_m - x_1) \right)\frac{\left( F_{D_2} (x_2) - F_{D_2} (r_m)\right) \left(1 - F_{D_1}(x_1)\right) }{\P(\calE_4 | X_1, X_2)} \\
\label{eq:cdfrm_e4_x1x2}&= \begin{dcases}
\frac{1 - \exp(-\lambda_c r_m)}{1 - \exp(-\lambda_c x_2)}, & 0 \leq r_m \leq x_1, \\
1 - \frac{\left( 1 - F_{Z_1,1}(r_m - x_1) \right)\left( e^{-\lambda_c r_m} - e^{-\lambda_c x_2} \right)}{1 - \exp(-\lambda_c x_2)}, \ &x_1 \leq r_m < x_2,\\
1, & x_2 \leq r_m < \infty.
\end{dcases}
\end{align}

Using the results derived thus far, we now present the CDF of the overall shortest path distance $R_m$ in the following theorem.

\begin{theorem}\label{theorem:rm_gen_user}
	The CDF of the shortest path distance from the typical point of the MPLCP to its nearest neighbor in the sense of path distance is 
	\begin{align}
	\notag F_{R_m}(r_m) = &1 - e^{-2(\lambda_l+\lambda_c)r_m}  \\
	& \notag \hspace{2em} -2 \lambda_l e^{-(\lambda_l+\lambda_c)r_m}\int_0^{r_m} \big( 1-  F_{W_1,1}(r_m-x_1|\calE_3,x_1, x_2)\big) e^{-(\lambda_l+\lambda_c) x_1} \nrmd x_1 \\ 
	& \notag \hspace{2em} -2 \lambda_l^2 \int_{\frac{r_m}{2}}^{r_m}  e^{-(\lambda_l+\lambda_c) x_2} \big(1 - F_{W_2,1}(r_m-x_2| \calE_3, x_1, x_2)\big) \\
		&\notag \hspace{5em} \times \int_{r_m-x_2}^{x_2} \big(1 - F_{W_1,1}(r_m-x_1|\calE_3, x_1, x_2)\big) e^{-(\lambda_l+\lambda_c) x_1} \nrmd x_1 \nrmd x_2 \\
	& \notag \hspace{2em} -2 \lambda_l^2 \int_{\frac{r_m}{2}}^{r_m} e^{-(\lambda_l+\lambda_c) x_2} \int_0^{r_m - x_2}\big( 1 - F_{W_1,2}(r_m-x_1| \calE_3, x_1, x_2)\big)\\
		&\notag \hspace{5em} \times  \big(1 - F_{W_2, 2}(r_m-x_2|\calE_3, x_1,x_2)\big) e^{-(\lambda_l+\lambda_c) x_1} \nrmd x_1 \nrmd x_2 \\ 
	 & \notag \hspace{2em} -2 \lambda_l^2 \int_0^{\frac{r_m}{2}} e^{-(\lambda_l+\lambda_c) x_2} \int_0^{x_2} \big(1 - F_{W_1,2}(r_m-x_1| \calE_3, x_1, x_2) \big) \\
			\label{eq:cdfrm_gen}	& \hspace{5em} \times \big(1 - F_{W_2, 2}(r_m-x_2|\calE_3, x_1,x_2)\big) e^{-(\lambda_l+\lambda_c) x_1} \nrmd x_1 \nrmd x_2 .
	\end{align}	
\end{theorem}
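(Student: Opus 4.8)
The plan is to obtain $F_{R_m}(r_m)$ by the law of total probability over the four events $\calE_1,\dots,\calE_4$ and then average over the joint law of $(X_1,X_2)$. Concretely, I would start from
\[
1 - F_{R_m}(r_m) = \P(R_m > r_m) = \sum_{i=1}^{4}\int\!\!\int_{0\le x_1\le x_2} \P\!\left(R_m > r_m \mid \calE_i, x_1, x_2\right)\,\P\!\left(\calE_i \mid x_1, x_2\right)\, f_{X_1,X_2}(x_1,x_2)\,\nrmd x_1\,\nrmd x_2,
\]
substituting the conditional survival functions read off from \eqref{eq:cdfrm_e1_x1x2}, \eqref{eq:cdfrm_e2_x1x2}, \eqref{eq:cdfrm_e3_x1x2}, and \eqref{eq:cdfrm_e4_x1x2}, the event probabilities \eqref{eq:pe1_x1x2}--\eqref{eq:pe4_x1x2}, and the joint density \eqref{eq:pdf_x1x2}. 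Since every conditional survival function is piecewise in $r_m$ relative to the breakpoints $x_1$, $x_2$, and $x_1+x_2$, the substance of the argument is to split the integration domain accordingly and collect the surviving pieces.

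The first and cleanest region to treat is $x_1 > r_m$, where neither nearest intersection is reachable within path distance $r_m$, so the only points in range lie on $L_x$. Here I would argue that the conditional survival probability equals the void probability $e^{-2\lambda_c r_m}$ of the length-$2r_m$ segment of $L_x$ centred at the origin, independently of the event; equivalently, one checks algebraically that the four event contributions telescope to $e^{-2\lambda_c r_m}$. Multiplying by $\P(X_1 > r_m)=e^{-2\lambda_l r_m}$ then yields the closed-form term $e^{-2(\lambda_l+\lambda_c)r_m}$ that heads \eqref{eq:cdfrm_gen}.

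On the region $x_1 \le r_m < x_2$ only $\calE_3$ and $\calE_4$ contribute, because the $\calE_1$ and $\calE_2$ survival probabilities vanish once $r_m \ge x_1$ (there $R_m \le D_1 \le x_1$). Using the identity $F_{Z_1,1}=F_{W_1,1}$ from the $\calE_4$ analysis, I expect the $\calE_3$ and $\calE_4$ terms to combine so that the $D_2$-dependent factors cancel, leaving $\big(1 - F_{W_1,1}(r_m - x_1)\big)\,e^{-\lambda_c(x_1+r_m)}$; carrying out the $x_2$-integral over $(r_m,\infty)$ against the density then produces the first displayed integral term. Finally, on $x_1 \le x_2 \le r_m$ only $\calE_3$ survives, and splitting further by whether $r_m < x_1+x_2$ or $r_m \ge x_1+x_2$ (selecting $F_{\cdot,1}$ versus $F_{\cdot,2}$), and then by $x_2 \lessgtr r_m/2$ to set the inner limits, yields exactly the remaining three integral terms with their stated bounds.

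The main obstacle is bookkeeping rather than any single hard estimate: correctly partitioning the $(x_1,x_2)$-plane against the three breakpoints for each event, and verifying the two nontrivial collapses, namely the telescoping to $e^{-2\lambda_c r_m}$ on $\{x_1 > r_m\}$ and the cancellation of the $D_2$ factors when $\calE_3$ and $\calE_4$ are added on $\{x_1 \le r_m < x_2\}$. Once these simplifications are in place, assembling $1$ minus the five resulting pieces gives \eqref{eq:cdfrm_gen}.
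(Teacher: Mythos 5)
Your proposal is correct and follows essentially the same route as the paper: the law of total probability over $\calE_1,\dots,\calE_4$ conditioned on $(X_1,X_2)$, substitution of \eqref{eq:cdfrm_e1_x1x2}--\eqref{eq:cdfrm_e4_x1x2} and \eqref{eq:pe1_x1x2}--\eqref{eq:pe4_x1x2}, and integration against $f_{X_1,X_2}$ in \eqref{eq:pdf_x1x2}. In fact, you make explicit the bookkeeping the paper compresses into ``solving the resulting integrals'': both of your claimed collapses check out (the four event contributions sum to $e^{-2\lambda_c r_m}$ on $\{x_1>r_m\}$, and the $\calE_3$ and $\calE_4$ terms combine to $\bigl(1-F_{W_1,1}(r_m-x_1)\bigr)e^{-\lambda_c(x_1+r_m)}$ on $\{x_1\le r_m<x_2\}$), and your partition of $\{x_1\le x_2\le r_m\}$ by the breakpoints $x_1+x_2$ and $r_m/2$ yields exactly the last three integral terms of \eqref{eq:cdfrm_gen}.
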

\begin{proof}
	Using law of total probability, the CDF of $R_m$ conditioned on $X_1$ and $X_2$ can be computed as
	\begin{align}
	F_{R_m}(r_m | x_1, x_2) = \sum_{i=1}^4 F_{R_m}(r_m | \calE_i, X_1, X_2) \P(\calE_i | X_1, X_2), 
	\end{align}
	where the expressions for $F_{R_m}(r_m | \calE_i, X_1, X_2)$ are given in \eqref{eq:cdfrm_e1_x1x2}, \eqref{eq:cdfrm_e2_x1x2}, \eqref{eq:cdfrm_e3_x1x2}, \eqref{eq:cdfrm_e4_x1x2}, and $\P(\calE_i | X_1, X_2)$ are given in \eqref{eq:pe1_x1x2}, \eqref{eq:pe2_x1x2}, \eqref{eq:pe3_x1x2}, \eqref{eq:pe4_x1x2}, respectively.
	The overall CDF of $R_m$ can now be obtained by computing the expectation of the above expression w.r.t. $X_1$ and $X_2$ as
	\begin{align}
	F_{R_m}(r_m) = \int_0^\infty \int_0^{x_2} F_{R_m}(r_m | X_1, X_2) f_{X_1, X_2}( x_1, x_2) \nrmd x_1 \nrmd x_2,
	\end{align}
	where  $f_{X_1, X_2}( x_1, x_2)$ is given in \eqref{eq:pdf_x1x2}. Upon substituting the corresponding expressions and solving the resulting integrals, we obtain the expression given in \eqref{eq:cdfrm_gen}.
\end{proof}

\begin{figure*}
	\centering
	\begin{subfigure}[b]{0.235\textwidth}
		\centering
		\includegraphics[width=\textwidth]{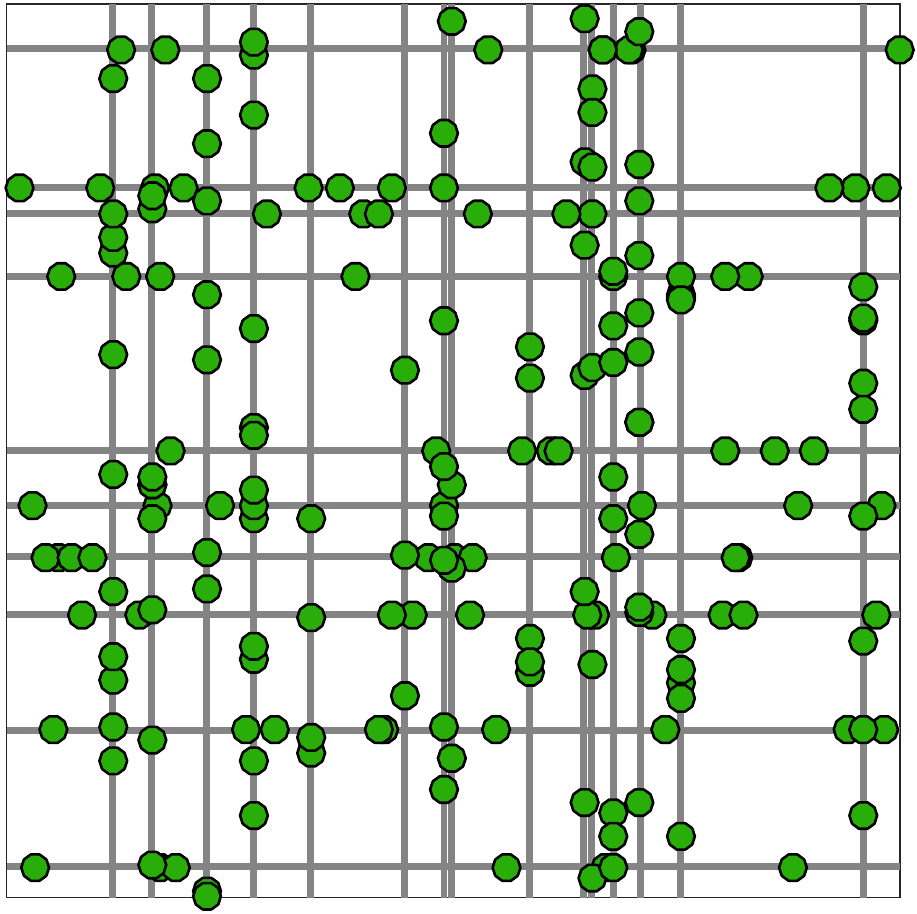}
		\caption{}%
	\end{subfigure}
	\hfill
	\begin{subfigure}[b]{0.235\textwidth}  
		\centering 
		\includegraphics[width=\textwidth]{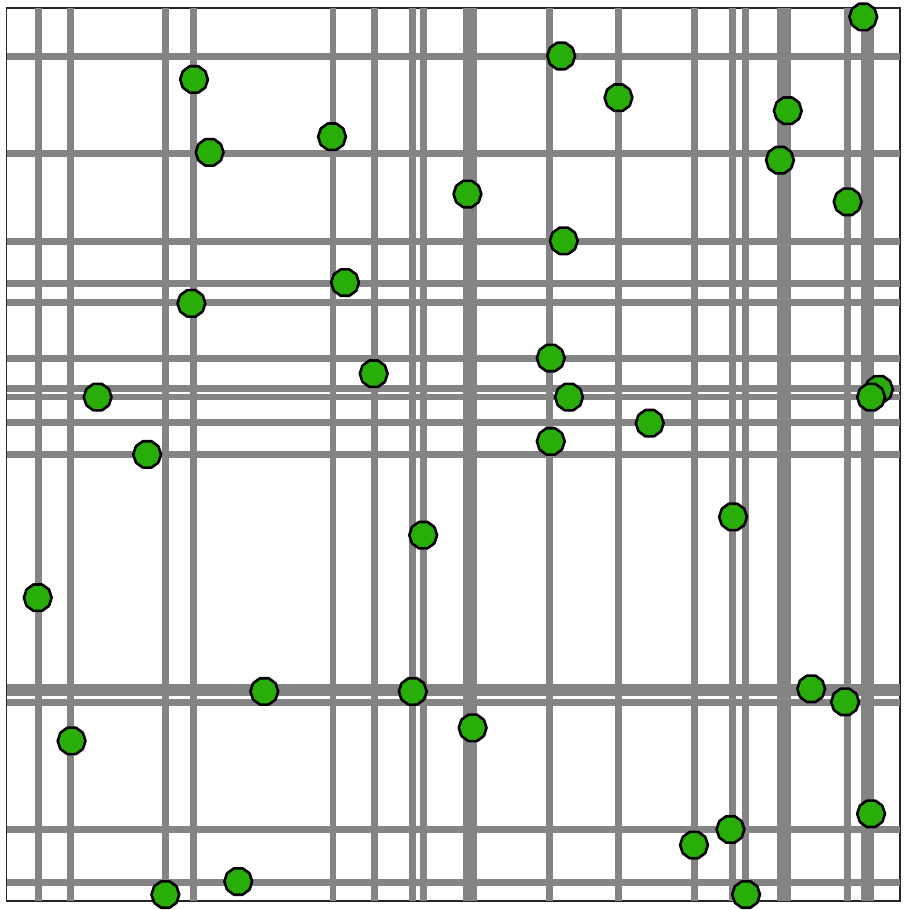}
		\caption{}   
	\end{subfigure}
	\hfill 
	\begin{subfigure}[b]{0.235\textwidth}   
		\centering 
		\includegraphics[width=\textwidth]{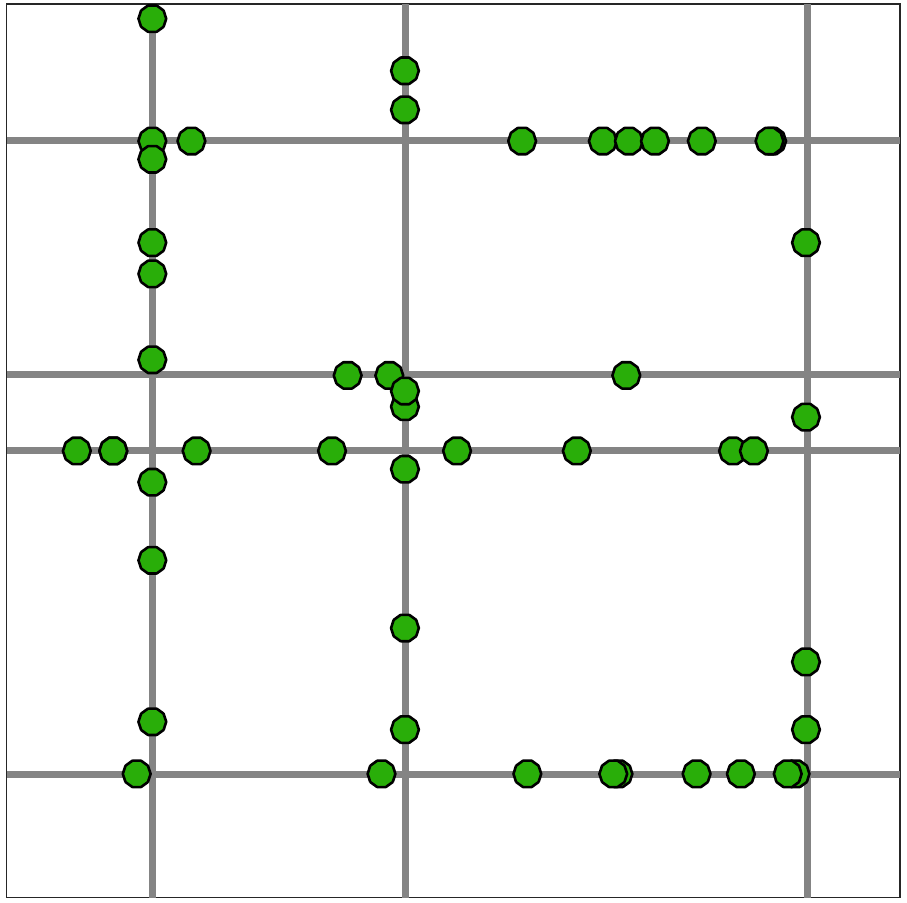}
		\caption{}
	\end{subfigure}
	\hfill
	\begin{subfigure}[b]{0.235\textwidth}   
		\centering 
		\includegraphics[width=\textwidth]{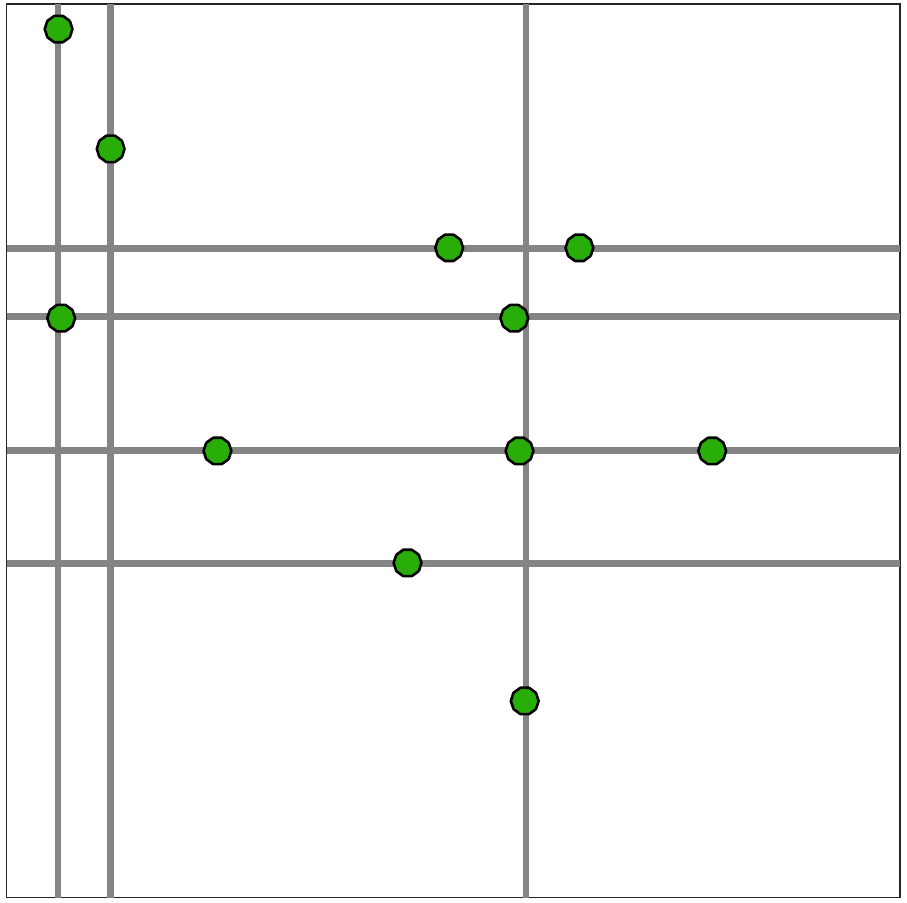}
		\caption{}
	\end{subfigure}
	\caption{The four different regimes of the spatial model: (a) Dense lines - dense points (DL-DP), (b) Dense lines - sparse points (DL-SP), (c) Sparse lines - dense points (SL-DP), and (d) Sparse lines - sparse points (SL-SP).}
	\label{fig:configs}
\end{figure*}

\begin{figure}
	\centering
	\includegraphics[width=.5\textwidth]{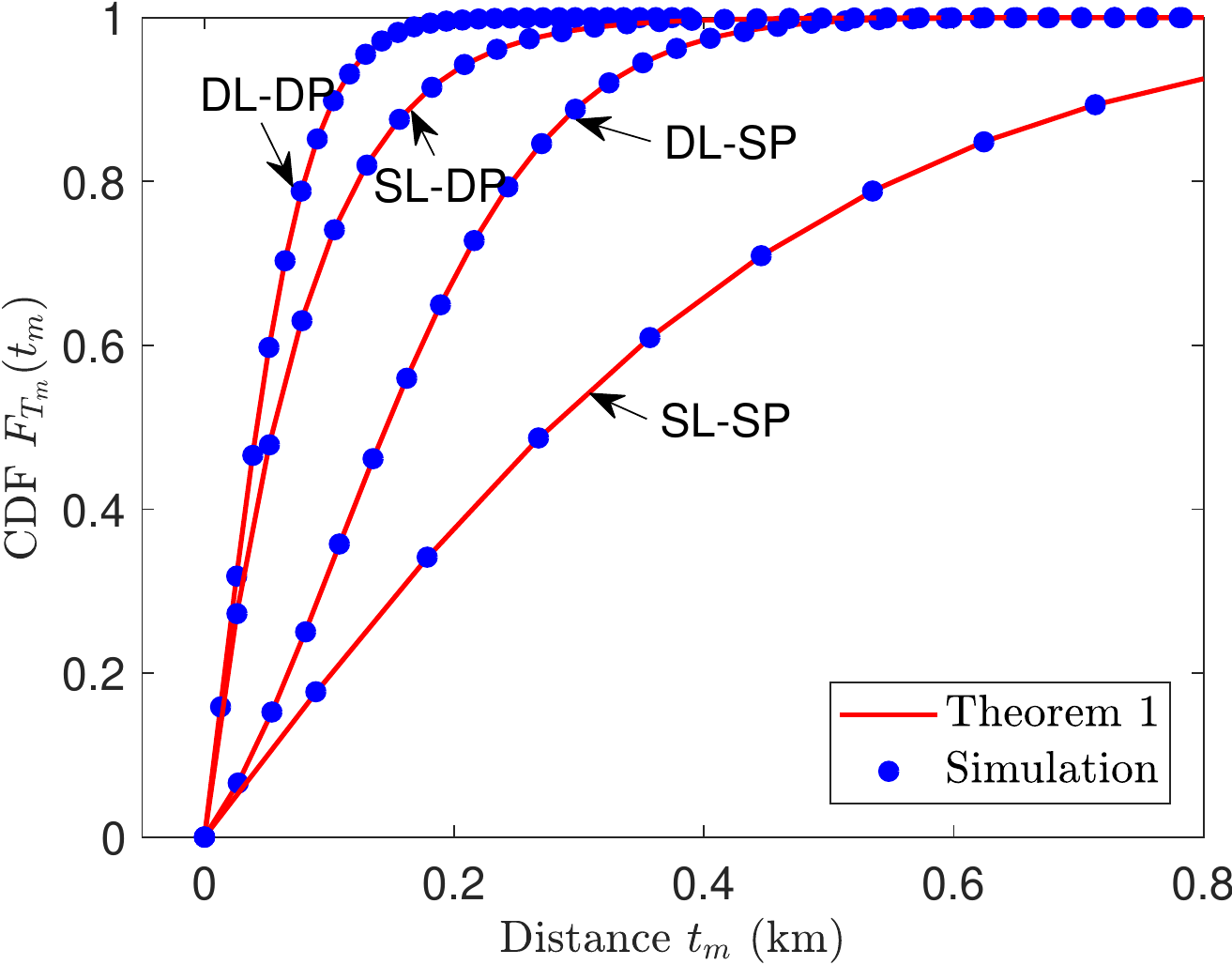}
	\caption{CDF of the shortest path distance for the {typical intersection case} for the four regimes: DL-DP ($\lambda_l = 10$ km$^{-1}$, $\lambda_c = 3$ points/km), SL-DP ($\lambda_l = 1$ km$^{-1}$, $\lambda_c = 3$ points/km), DL-SP ($\lambda_l = 10$ km$^{-1}$, $\lambda_c = 0.5$ points/km), and SL-SP ($\lambda_l = 1$ km$^{-1}$, $\lambda_c = 0.5$ points/km).}
	\label{fig:numcdf_int}
\end{figure}
\begin{figure}
	\centering
	\includegraphics[width=.5\textwidth]{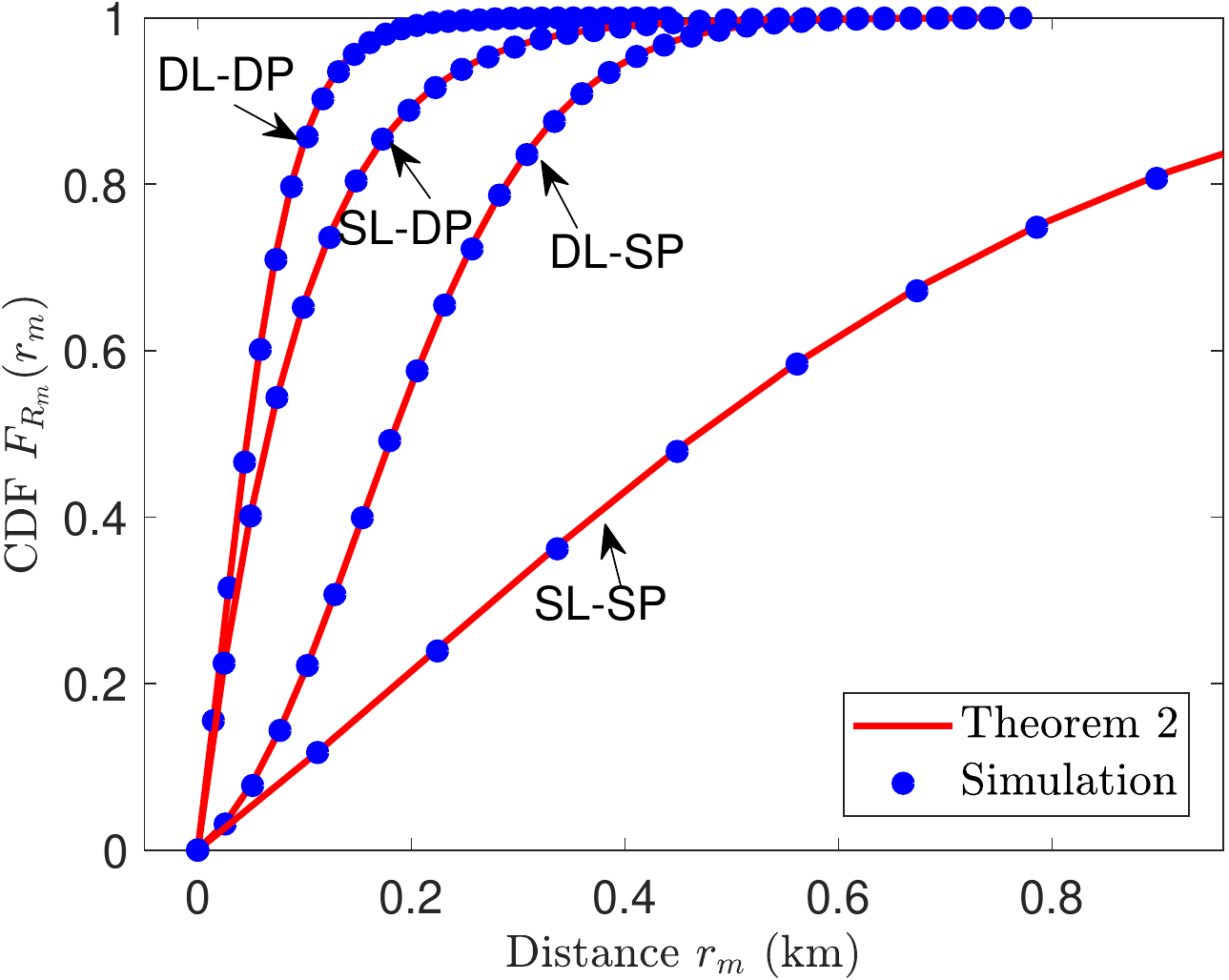}
	\caption{CDF of the shortest path distance for the typical point of the MPLCP for the four regimes: DL-DP ($\lambda_l = 10$ km$^{-1}$, $\lambda_c = 5$ points/km), SL-DP ($\lambda_l = 1$ km$^{-1}$, $\lambda_c = 5$ points/km), DL-SP ($\lambda_l = 10$ km$^{-1}$, $\lambda_c = 0.5$ points/km), and SL-SP ($\lambda_l = 1$ km$^{-1}$, $\lambda_c = 0.5$ points/km).}
	\label{fig:numcdf_gen}
\end{figure}

\section{Results and Discussion}

In this section, we will present the numerical results and discuss some of the applications of these results in transportation networks, infrastructure planning and wireless networks.

\subsection{Numerical results}
We first compute the empirical CDF of shortest path distances using Monte-Carlo simulations and compare it with the CDF obtained from the analytical expressions given in Theorems \ref{theorem:rm_int_user} and \ref{theorem:rm_gen_user}. In order to visualize the path distance distribution for different combinations of line and point densities, we evaluate the results under four broad regimes based on the densities of lines and points: (i) dense lines - dense points (DL-DP) corresponding to large values of $\lambda_l$ and $\lambda_c$, (ii) dense lines - sparse points (DL-SP) corresponding to large values of $\lambda_l$ and small values of $\lambda_c$, (iii) sparse lines - dense points (SL-DP) corresponding to small values of $\lambda_l$ and large values of $\lambda_c$, and (iv) sparse lines - sparse points (SL-SP) corresponding to small values of $\lambda_l$ and $\lambda_c$, as illustrated in Fig. \ref{fig:configs}. Note that Fig. \ref{fig:configs} is only for illustration purpose and the actual simulation values corresponding to these configurations are provided along with the results in Figs. \ref{fig:numcdf_int} and \ref{fig:numcdf_gen}. {As expected, the CDF obtained from the analytical expressions for both the cases match exactly with the corresponding empirical CDFs for all the configurations, as depicted in Figs. \ref{fig:numcdf_int} and \ref{fig:numcdf_gen}}.

\subsection{Applications to other areas}\label{sec:applications}

\subsubsection{Wireless communication} 
As we have mentioned in Section \ref{sec:intro}, the MPLCP can be used to model the locations of vehicular nodes and roadside units (RSUs) in a vehicular network and analyze key performance metrics such as coverage and rate by leveraging the Euclidean distance properties. However, in the case of millimeter wave communications in an urban environment, the high frequency radio signals suffer from severe attenuation upon propagating through the buildings and the dominant component of the signal is often the one that travels along the roads with diffractions around the corners at intersections \cite{HeathMmwave}. As a result, the analytical techniques developed in this paper can be leveraged to characterize the propagation delays and the received power of such signals. This is quite useful in deriving the power-delay profile of the wireless channel which is an important exercise in the performance analysis of wireless networks.

\subsubsection{Transportation systems and infrastructure planning}
In transportation networks, the spatial layout of roads can be modeled by MPLP and the various places of interest such as gas stations or charging stations for electric vehicles can be modeled by a MPLCP. Thus, the length of the shortest path studied in the paper can be viewed as the shortest distance that needs to be traveled by a vehicular user to reach the nearest destination of a certain type. Building further on the results presented in this paper, it is possible to analytically characterize the distance-dependent cost metrics that are of interest in transportation systems such as minimum travel time and fuel consumption. These results can be useful in characterizing the response time of medical or police personnel to arrive at the site of an emergency. Such analyses can also provide macroscopic insights into urban planning and design.

\section{Conclusion}
In this paper, we focused on the analytical characterization of the shortest path distance in a stationary MPLCP. {In particular, for this spatial model, we derived the exact CDF of the shortest path distance to the nearest point of the MPLCP in the sense of path distance from the typical intersection of the MPLP and the typical point of the MPLCP.} We then discussed some useful applications of our results in wireless communication networks, transportation networks, infrastructure planning and personnel deployment.

This work has several extensions. First of all, the spatial model considered in the paper can be used to study other useful metrics such as route-length efficiency statistic which is defined as a function of the ratio of the shortest path distance between a pair of points to the corresponding Euclidean distance between those points \cite{aldous1}. While we have derived the results for a MPLCP, the analytical procedure can be extended to a PLCP. Also, the discussion on applications of our results in transportation, infrastructure planning, and wireless communication in Section \ref{sec:applications} could motivate future work in all these areas.

\section*{acknowledgements}
This work is supported by the US National Science Foundation (Grant IIS-1633363) and UK Engineering and Physical Sciences Research Council (Grant EP/N002458/1). The authors would like to thank an anonymous reviewer for the constructive feedback that helped in improving this paper. All the code required to reproduce the numerical results is available on GitHub \cite{code}.

\bibliographystyle{IEEEtran}
\bibliography{arxiv_v3.bbl}

%
%



\end{document}